\documentclass[a4paper]{article}
\usepackage{geometry}
\geometry{left=2.75cm,right=2.75cm,top=2.9cm,bottom=2.9cm}
\usepackage{amsmath}
\usepackage{amssymb}
\usepackage{indentfirst}
\usepackage{mathdots}
\usepackage{cite}
\usepackage[colorlinks, linkcolor=blue, citecolor=blue]{hyperref}
\usepackage{booktabs}
\usepackage{enumitem}
\usepackage{mathrsfs}
\usepackage{enumitem}
\usepackage{authblk}
\usepackage{amsfonts,ntheorem}
\usepackage{float}
\usepackage{setspace}
\numberwithin{equation}{section}
\usepackage{color}

\usepackage[justification=centering]{caption}

\usepackage{graphicx}
\usepackage{subfigure}

\makeatletter

\newcommand{\Rmnum}[1]{\expandafter\@slowromancap\romannumeral #1@}
\makeatother

\newtheorem{theorem}{Theorem}[section]
\newtheorem{lemma}[theorem]{Lemma}

\newtheorem{proposition}[theorem]{Proposition}

{ 
\theoremstyle{plain}
\theorembodyfont{\normalfont} 

\newtheorem{remark}[theorem]{Remark}

\newtheorem{definition}[theorem]{Definition}

}

\newenvironment{proof}{\noindent{\textbf{\emph{Proof.}}}}

\allowdisplaybreaks[4]

\linespread{1.55}

\begin{document}

\title{Construction of $\varepsilon_{d}$-ASIC-POVMs via $2$-to-$1$ PN functions and the Li bound}

\author[1]{{\normalsize{Meng Cao}} \thanks{E-mail address: mengcao@bimsa.cn}}
\author[2]{{\normalsize{Xiantao Deng}} \thanks{E-mail address: xiantaodeng@126.com}}
\affil[1]{\footnotesize{Yanqi Lake Beijing Institute of Mathematical Sciences and Applications, Beijing, 101408, China }}
\affil[2]{\footnotesize{Department of Mathematics, Sichuan University, Chengdu, 610064, China }}
\renewcommand*{\Affilfont}{\small\it}

\date{}

\maketitle

\vspace{-15pt}

{\linespread{1.4}{

\begin{abstract}
Symmetric informationally complete positive operator-valued measures (SIC-POVMs) in finite dimension $d$ are a particularly attractive case of informationally complete POVMs (IC-POVMs), which consist of $d^{2}$ subnormalized projectors with equal pairwise fidelity. However, it is difficult to construct SIC-POVMs, and it is not even clear whether there exists an infinite family of SIC-POVMs. To realize some possible applications in quantum information processing, Klappenecker \emph{et al.} \cite{Klappenecker2005On} introduced an approximate version of SIC-POVMs called approximately symmetric informationally complete POVMs (ASIC-POVMs).
In this paper, we construct a class of $\varepsilon_{d}$-ASIC-POVMs in dimension $d=q$ and a class of $\varepsilon_{d}$-ASIC-POVMs in dimension $d=q+1$, respectively,
where $q$ is a prime power.
We prove that all $2$-to-$1$ perfect nonlinear (PN) functions can be used for constructing $\varepsilon_{q}$-ASIC-POVMs.
We show that the set of vectors corresponding to the $\varepsilon_{q}$-ASIC-POVM forms a biangular frame.
The construction of $\varepsilon_{q+1}$-ASIC-POVMs is based on a multiplicative character sum estimate called the Li bound.
We show that the set of vectors corresponding to the $\varepsilon_{q+1}$-ASIC-POVM forms an asymptotically optimal codebook.
We characterize ``how close'' the $\varepsilon_{q}$-ASIC-POVMs (resp. $\varepsilon_{q+1}$-ASIC-POVMs) are from being
SIC-POVMs of dimension $q$ (resp. dimension $q+1$). Finally, we explain the significance of constructing $\varepsilon_{d}$-ASIC-POVMs.
\end{abstract}

\small{\noindent {\bfseries Keywords:} {$\varepsilon_{d}$-approximately symmetric informationally complete positive operator-valued measures ($\varepsilon_{d}$-ASIC-POVMs); \  \ $2$-to-$1$ perfect nonlinear (PN) functions; \  \ the Li bound}}

\section{Introduction}
Suppose $d$ is a positive integer. Let $\mathbb{C}^{d}$ denote the $d$-dimensional vector space over the complex field $\mathbb{C}$.
Denote by $M_{d}(\mathbb{C})$ the set consisting of all $d\times d$ matrices with entries in $\mathbb{C}$.

An operator $A$ on $\mathbb{C}^{d}$ is called a \textbf{positive operator} if it satisfies the following two conditions:

1) $A$ is a Hermitian operator, i.e., $A=A^{\ast}$, where the transposed conjugate operator $A^{\ast}:=\overline{A^{T}}$ is called the adjoint operator of $A$;

2) $\langle v|A|v\rangle\geq 0$ for any nonzero vector $|v\rangle\in \mathbb{C}^{d}$.

We write $A\geq 0$ if $A$ is a positive operator. A positive operator-valued measure (POVM) is a general measure on a quantum system.
Let us recall its definition.

\begin{definition}\label{definition1.1}
\rm{(\!\!\cite{Davies1976Quantum,Busch1997Operational,Peres2006Quantum,Nielsen2000Quantum})}
Let $X$ be a set. A collection of operators $\{E_{i}\}_{i\in X}$ on $\mathbb{C}^{d}$ is called a \textbf{positive operator-valued measure (POVM)}
if it satisfies
the following two conditions:

(i) $E_{i}\geq 0$ for each $i$;

(ii) $\sum_{i\in X}E_{i}=I_{d}$.
\end{definition}

We call $\rho\in M_{d}(\mathbb{C})$ a $d$-dimensional quantum state (or a qudit) if $\rho\geq 0$ with $\mathrm{Tr}(\rho)=1$.
Let $X$ be a set. Suppose a measurement described by a POVM $\{E_{i}\}_{i\in X}$ is performed in $\rho$.
Then, the probability of obtaining outcome $i\in X$ related to $E_{i}$ is given by the Born rule \cite{Born1955Statistical}:
$p(i)=\mathrm{Tr}(E_i\rho)$. Note that these probabilities $\{p(i)\}_{i\in X}$ must satisfy $\sum_{i\in X}p(i)=1$, which is equivalent to $\sum_{i\in X}E_{i}=I_{d}$ since $\mathrm{Tr}(\rho)=1$.

If the unknown state $\rho$ can be uniquely determined from these probabilities $\{p(i)\}_{i\in X}$, then the POVM $\{E_{i}\}_{i\in X}$ is called an
\textbf{informationally complete POVM (IC-POVM)}. An IC-POVM has the following property.

\begin{proposition}\label{proposition1.2}
\rm{(\!\!\cite[Theorem 2.4]{Ohno2014Necessary})}
\emph{Let $\{E_{i}\}_{i\in X}$ be a POVM on $\mathbb{C}^{d}$.
Then $\{E_{i}\}_{i\in X}$ is an IC-POVM if and only if $\mathrm{span}\{E_{i}\}_{i\in X}=M_{d}(\mathbb{C})$.
}
\end{proposition}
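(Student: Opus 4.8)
The plan is to translate informational completeness into a statement about the Hilbert--Schmidt inner product on $M_{d}(\mathbb{C})$, defined by $\langle A,B\rangle:=\mathrm{Tr}(A^{\ast}B)$, and then to exploit the normalization $\sum_{i\in X}E_{i}=I_{d}$. The starting observation is that two states $\rho_{1},\rho_{2}$ produce identical probability data exactly when $\mathrm{Tr}(E_{i}(\rho_{1}-\rho_{2}))=0$ for every $i\in X$. Writing $H:=\rho_{1}-\rho_{2}$, which is Hermitian and traceless, the POVM fails to be informationally complete precisely when there exists a nonzero Hermitian $H$ with $\mathrm{Tr}(E_{i}H)=0$ for all $i$. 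Since each $E_{i}$ is Hermitian, $\langle E_{i},H\rangle=\mathrm{Tr}(E_{i}H)$, so this condition says exactly that $H$ lies in the orthogonal complement of $\mathrm{span}\{E_{i}\}_{i\in X}$.

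For the ``if'' direction I would assume $\mathrm{span}\{E_{i}\}_{i\in X}=M_{d}(\mathbb{C})$. Then its orthogonal complement is $\{0\}$, so the only matrix orthogonal to every $E_{i}$ is the zero matrix; hence $\rho_{1}=\rho_{2}$ whenever their probability data agree, and the POVM is informationally complete.

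For the ``only if'' direction I would argue contrapositively: suppose $\mathrm{span}\{E_{i}\}_{i\in X}\neq M_{d}(\mathbb{C})$, so there is a nonzero $B$ with $\mathrm{Tr}(E_{i}B)=0$ for all $i$. Decomposing $B=B_{1}+iB_{2}$ into Hermitian parts $B_{1}=(B+B^{\ast})/2$, $B_{2}=(B-B^{\ast})/(2i)$ and using that $\mathrm{Tr}(E_{i}B_{1})$, $\mathrm{Tr}(E_{i}B_{2})$ are real, I obtain $\mathrm{Tr}(E_{i}B_{1})=\mathrm{Tr}(E_{i}B_{2})=0$ for all $i$; at least one of $B_{1},B_{2}$ is a nonzero Hermitian matrix $H$. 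The point where the POVM structure enters is tracelessness, which comes for free: from $\sum_{i\in X}E_{i}=I_{d}$ one gets $\mathrm{Tr}(H)=\sum_{i\in X}\mathrm{Tr}(E_{i}H)=0$.

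Finally I would manufacture two genuine distinct states from $H$. Setting $\rho_{1}=\tfrac{1}{d}I_{d}+\varepsilon H$ and $\rho_{2}=\tfrac{1}{d}I_{d}-\varepsilon H$ for sufficiently small $\varepsilon>0$ keeps both operators positive semidefinite (the eigenvalues of $\tfrac{1}{d}I_{d}$ are bounded away from $0$, so a small Hermitian perturbation stays positive) with unit trace, while $\rho_{1}\neq\rho_{2}$ and $\mathrm{Tr}(E_{i}(\rho_{1}-\rho_{2}))=2\varepsilon\,\mathrm{Tr}(E_{i}H)=0$ for all $i$. Thus two distinct states are indistinguishable, and the POVM is not informationally complete. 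I expect the only delicate point to be this last step: states form a convex set rather than a linear subspace, so $H$ itself is not a state and one must verify that a small perturbation of the maximally mixed state remains a legitimate quantum state.
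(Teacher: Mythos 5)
Your proof is correct, but note that the paper itself does not prove this proposition: it is quoted verbatim from the cited reference (Theorem 2.4 of Ohno's paper) and used as a known fact, so there is no in-paper argument to compare against. Your argument is the standard self-contained one, and all the delicate points are handled properly: the reduction of informational completeness to the Hilbert--Schmidt orthogonal complement of $\mathrm{span}\{E_{i}\}_{i\in X}$ is valid because each $E_{i}$ is Hermitian, so $\langle E_{i},H\rangle=\mathrm{Tr}(E_{i}H)$; the splitting $B=B_{1}+iB_{2}$ with both traces real correctly produces a nonzero Hermitian witness $H$; the observation that tracelessness of $H$ comes for free from $\sum_{i\in X}E_{i}=I_{d}$ is exactly the step where the POVM normalization must enter (without it, $\tfrac{1}{d}I_{d}\pm\varepsilon H$ would fail to have unit trace, and renormalizing $H$ by subtracting $\tfrac{\mathrm{Tr}(H)}{d}I_{d}$ would destroy orthogonality to the $E_{i}$); and the positivity of $\tfrac{1}{d}I_{d}\pm\varepsilon H$ for $\varepsilon<\tfrac{1}{d\|H\|}$ is immediate from eigenvalue perturbation. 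The only cosmetic caveat is that for an infinite index set $X$ the interchange $\mathrm{Tr}(H)=\sum_{i\in X}\mathrm{Tr}(E_{i}H)$ deserves a word of justification (harmless in finite dimension, where convergence of $\sum_{i}E_{i}$ entrywise gives convergence of the traces), but for the finite families actually used in this paper this is a non-issue.
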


Note that, in fact, sometimes we also refer to the ``if'' part of this proposition as the definition of an IC-POVM
(see, e.g., \cite[Definition 1]{Czerwinski2021Quantum}).
By Proposition \ref{proposition1.2}, we know that the cardinality of an IC-POVM on $\mathbb{C}^{d}$ is at least $d^{2}$.

A particularly appealing and important class of IC-POVMs on $\mathbb{C}^{d}$ are the symmetric informationally complete POVMs (SIC-POVMs),
whose cardinality is exactly $d^{2}$.

\begin{definition}\label{definition1.3}
\rm{(\!\!\cite[pp.2171-2172]{Renes2004Symmetric})}
A POVM $\{E_{1},E_{2},\ldots,E_{d^{2}}\}$ on $\mathbb{C}^{d}$ is called a \textbf{symmetric informationally complete POVM (SIC-POVM)} if it satisfies
the following two coditions:

(i) $E_{i}=\frac{1}{d}|u_{i}\rangle\langle u_{i}|$, where $|u_{i}\rangle$ is a normalized vector in $\mathbb{C}^{d}$ for each $i=1,2,\ldots,d^{2}$;

(ii) $|\langle u_{i}|u_{j}\rangle|^{2}=d^{2}\mathrm{Tr}(E_{i}E_{j})=\frac{1}{d+1}$ for all $1\leq i\neq j\leq d^{2}$.
\end{definition}

According to Definition \ref{definition1.3}, a SIC-POVM on $\mathbb{C}^{d}$ is actually equivalent to
$d^{2}$ pairwise equiangular complex lines, i.e., $d^{2}$ one-dimensional subspaces $\mathbb{C}|u_{1}\rangle,\mathbb{C}|u_{2}\rangle,\ldots,\mathbb{C}|u_{d^{2}}\rangle$ in $\mathbb{C}^{d}$,
which was first studied by Lemmens and Seidal in \cite{Lemmens1973Equiangular}.
Since then, the study of SIC-POVMs and their variants has achieved rapid development
(see, e.g., \cite{Appleby2015Group,Appleby2011The,Delsarte1991Bounds,Hoggar1998Line,Konig1994Norms,Konig1999Cubature,Strohmer2003Grassmannian,
Zauner1999Quantum,Kopp2021SIC-POVMs,Zhu2010SIC,Zhu2018Universally,Tavakoli2020Compounds,Petz2014Conditional,Gour2014Construction,Geng2021What}).
In addition, SIC-POVMs are also closely related to quantum state tomography \cite{Caves2002Unknown,Scott2006Tight},
quantum cryptography \cite{Fuchs2003Squeezing,Fuchs2004On}, design theory \cite{Renes2004Symmetric,Zauner2011Quantum,Klappenecker2005Mutually},
frame theory \cite{Fickus2021Mutually,Fickus2018Tremain,Cahill2018Constructions,Magsino2019Biangular}, etc.

Zauner conjectured in his 1999 PhD thesis \cite{Zauner1999Quantum} that SIC-POVMs on $\mathbb{C}^{d}$ exist in every finite dimension $d\geq 2$.
The analytical constructions of SIC-POVMs on $\mathbb{C}^{d}$ have been given for
$d=2${\color{red}{-}}$24$, $28$, $30$, $31$, $35$, $37$, $39$, $43$, $48$, $124$, $323$
(see \cite{Zauner1999Quantum,Appleby2005Symmetric,Appleby2014Systems,Appleby2012The,Appleby2018Constructing,Grassl2009OnSIC,
Grassl2005Tomography,Grassl2008Computing,Scott2010Symmetric,Grassl2017FibonacciLucas}).
Besides, numerical solutions of SIC-POVMs through computer calculations have been found in all dimensions up to $d=151$,
as well as several other dimensions up to $d=844$
(see \cite{Renes2004Symmetric,Scott2010Symmetric,Grassl2017FibonacciLucas,Fuchs2017The,Scott2017SICs}).

While it has been generally speculated that SIC-POVMs exist in every finite dimension, a rigorous proof of it is still unknown.
Infinite families of SIC-POVMs have not been previously constructed. It is not even clear whether SIC-POVMs on $\mathbb{C}^{d}$
exist for infinitely many dimensions $d$. To realize some possible applications in quantum information processing, Klappenecker \emph{et al.} \cite{Klappenecker2005On} introduced an approximate version of SIC-POVMs by relaxing condition (ii) of Definition \ref{definition1.3} on the inner products slightly in the following definition.

\begin{definition}\label{definition1.4}
\rm{(\!\!\cite[Definition 1]{Klappenecker2005On})}
Suppose that $d$ is a positive integer. Let $\mathcal{A}=\{A_{i}=\frac{1}{d}|u_{i}\rangle\langle u_{i}|:i=1,2,\ldots,d^{2}\}$,
where each $|u_{i}\rangle$ is a normalized vector in $\mathbb{C}^{d}$.
If the following three conditions hold:
\begin{itemize}
\item [(i)] (completeness/POVM condition) $\sum_{i=1}^{d^{2}}A_{i}=I_{d}$;
\vspace{-4pt}

\item [(ii)] (approximate symmetry) $|\langle u_{i}|u_{j}\rangle|^{2}=d^{2}\mathrm{Tr}(A_{i}A_{j})\leq \frac{1+\varepsilon_{d}}{d}$ for all $1\leq i\neq j\leq d^{2}$,
where $\varepsilon_{d}>0$ and $\varepsilon_{d}\rightarrow0$ as $d\rightarrow\infty$;

\item [(iii)] (informational completeness) $A_{1},A_{2},\ldots,A_{d^{2}}$ are linearly independent as elements of $M_{d}(\mathbb{C})$,
\end{itemize}
then the set $\mathcal{A}$ is called an \textbf{approximately symmetric informationally complete POVM
(ASIC-POVM)}. In order to correspond to the term $\varepsilon_{d}$ in condition (ii), we call it an \textbf{$\varepsilon_{d}$-ASIC-POVM} throughout this paper.
\end{definition}

By the conditions (i) and (iii) in Definition \ref{definition1.4}, it is clear that $\varepsilon_{d}$-ASIC-POVMs are a case of IC-POVMs.
Klappenecker \emph{et al.} \cite{Klappenecker2005On} showed that there exist infinite families of $\varepsilon_{d}$-ASIC-POVMs.
To be specific, by using the existence of mutually unbiased bases (MUBs), they gave two constructions of $\varepsilon_{d}$-ASIC-POVMs in dimensions $d=q$ and $d=p$, where $q=p^{k}$ is a prime power. However, it is still difficult to construct $\varepsilon_{d}$-ASIC-POVMs.
After Klappenecker \emph{et al.} \cite{Klappenecker2005On} provided two constructions of $\varepsilon_{d}$-ASIC-POVMs,
only a few families of $\varepsilon_{d}$-ASIC-POVMs were constructed in the literature. These constructions are summarized as follows:
1) Wang \emph{et al.} \cite{Wang2012Constructions} showed a construction of $\varepsilon_{d}$-ASIC-POVMs in dimension $d=q-1$ via Gauss sums;
2) Cao \emph{et al.} \cite{Cao2017Two} obtained two constructions of $\varepsilon_{d}$-ASIC-POVMs in dimensions $d=q$ and $d=q-1$ through MUBs,
perfect nonlinear (PN) functions, permutation polynomials and related character sums;
3) Luo and Cao \cite{Luo2017Two} gave two constructions of $\varepsilon_{d}$-ASIC-POVMs in dimension $d=q$ through Katz sums and Eisenstein sums;
4) Luo and Cao \cite{Luo2018New} provided two constructions of $\varepsilon_{d}$-ASIC-POVMs in dimensions $d=q$ and $d=q+1$ through difference sets of finite abelian groups;
5) Luo \emph{et al.} \cite{Luo2021A} constructed $\varepsilon_{d}$-ASIC-POVMs in dimension $d=q-2$ via Jacobi sums.

Inspired by these excellent works above, in this paper, we construct a class of $\varepsilon_{d}$-ASIC-POVMs in dimension $d=q$ and a class of $\varepsilon_{d}$-ASIC-POVMs in dimension $d=q+1$, respectively, where $q$ is a prime power.
In the first construction, we prove that all $2$-to-$1$ PN functions can be used for constructing $\varepsilon_{q}$-ASIC-POVMs (see Theorem \ref{theorem2.10}).
We show that there exist $2$-to-$1$ PN functions that do not satisfy the condition in \cite[Theorem III.3]{Cao2017Two}; see Remark \ref{remark2.11}.
Thus, the class of $\varepsilon_{q}$-ASIC-POVMs constructed in this paper is more general than the class constructed in \cite{Cao2017Two}.
We also show that the set of vectors corresponding to the $\varepsilon_{q}$-ASIC-POVM forms a biangular frame (see Theorem \ref{theorem2.15}).
The construction of $\varepsilon_{q+1}$-ASIC-POVMs is based on a multiplicative character sum estimate called the Li bound (see Theorem \ref{theorem3.5}).
We show that the set of vectors corresponding to the $\varepsilon_{q+1}$-ASIC-POVM forms an asymptotically optimal codebook (see subsection \ref{subsection 3.3}).
In Theorem \ref{theorem2.10} (resp. Theorem \ref{theorem3.5}), we also characterize ``how close'' the $\varepsilon_{q}$-ASIC-POVMs (resp. $\varepsilon_{q+1}$-ASIC-POVMs) are from being SIC-POVMs of dimension $q$ (resp. dimension $q+1$). Finally, we explain the significance of constructing $\varepsilon_{d}$-ASIC-POVMs (see Section \ref{section4}).

The rest of this paper is organized as follows.
In Section \ref{section2}, we present the construction of $\varepsilon_{q}$-ASIC-POVMs by using general $2$-to-$1$ PN functions,
and we also show that the corresponding set of vectors forms a biangular frame.
In Section \ref{section3}, we provide the construction of $\varepsilon_{q+1}$-ASIC-POVMs based on the Li bound,
and we also show that the corresponding set of vectors forms an asymptotically optimal codebook.
In Section \ref{section4}, we give a conclusion of this paper.

\section{On $\varepsilon_{q}$-ASIC-POVMs and corresponding set of vectors by using $2$-to-$1$ PN functions}\label{section2}

In this section, we will present the construction of $\varepsilon_{q}$-ASIC-POVMs by using general $2$-to-$1$ PN functions.
From this construction, we are able to obtain more candidate $2$-to-$1$ PN functions for constructing $\varepsilon_{q}$-ASIC-POVMs.
Besides, we will also show that the set of vectors corresponding to the $\varepsilon_{q}$-ASIC-POVM forms a biangular frame.

\subsection{Some notation and a useful lemma}\label{subsection2.1}

First, let us recall some basic notation and results on characters and character groups of a finite field.
Let $q=p^{k}$ be a prime power. Set $\zeta_{p}=e^{\frac{2\pi i}{p}}$. Then, for each $a\in\mathbb{F}_{q}$, the mapping
\begin{equation}\label{eq2.1}
\chi_{a}(b)=\zeta_{p}^{\mathrm{tr}_{\mathbb{F}_{q}/\mathbb{F}_{p}}(ab)},\ b\in\mathbb{F}_{q}
\end{equation}
defines an additive character of $\mathbb{F}_{q}$, where $\mathrm{tr}_{\mathbb{F}_{q}/\mathbb{F}_{p}}(x):=\sum_{i=0}^{k-1}x^{p^{i}}$
represents the trace mapping from $\mathbb{F}_{q}$ to $\mathbb{F}_{p}$.
Clearly, $\chi_{0}(b)=1$ for each $b\in\mathbb{F}_{q}$. We call $\chi_{0}$ the trivial character of $\mathbb{F}_{q}$.
For each character $\chi_{a}$ of $\mathbb{F}_{q}$, we define its conjugate $\overline{\chi_{a}}$ as
$\overline{\chi_{a}}(b)=\overline{\chi_{a}(b)}$ for all $b\in\mathbb{F}_{q}$.
Then, $\overline{\chi_{a}}$ is a character of $\mathbb{F}_{q}$.
The set $\{\chi_{a}|a\in\mathbb{F}_{q}\}$ forms an abelian group, which is called the additive character group of $\mathbb{F}_{q}$
and denoted by $\widehat{\mathbb{F}_{q}}$.
One can verify that the map $(a\mapsto \chi_{a})$ defines an isomorphism from $\mathbb{F}_{q}$ to $\widehat{\mathbb{F}_{q}}$.

Based on these facts, one has the following properties (see, e.g., \cite{Lidl1997Finite}):
\begin{equation}\label{eq2.2}
\sum_{b\in\mathbb{F}_{q}}\chi_{a}(b)=\begin{cases}
q,&\mbox{if}\ a=0,\\
0,&\mbox{if}\ a\neq0,
\end{cases}
\end{equation}
and
\begin{equation}\label{eq2.3}
\sum_{a\in\mathbb{F}_{q}}\chi_{a}(b)=\begin{cases}
q,&\mbox{if}\ b=0,\\
0,&\mbox{if}\ b\neq0.
\end{cases}
\end{equation}

The following lemma will be used in the rest of this section.

\begin{lemma}\label{lemma2.1}
Let $q=p^{k}$ be a prime power. Suppose that $y_{a}\in\mathbb{C}$. Then,
$\sum_{a\in\mathbb{F}_{q}}y_{a}\chi_{a}(b)=0$ for all $b\in\mathbb{F}_{q}^{\ast}$ if and only if
there exists $m\in\mathbb{C}$ such that $y_{a}\equiv m$ for all $a\in\mathbb{F}_{q}$.
\end{lemma}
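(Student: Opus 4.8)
The plan is to treat the two implications separately, with the forward (``if'') direction being a one-line substitution and the reverse (``only if'') direction being a Fourier inversion argument built on the orthogonality relation \eqref{eq2.2}.

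For the ``if'' direction, I would simply plug $y_a\equiv m$ into the sum: $\sum_{a\in\mathbb{F}_q}y_a\chi_a(b)=m\sum_{a\in\mathbb{F}_q}\chi_a(b)$, which equals $0$ for every $b\in\mathbb{F}_q^{\ast}$ by \eqref{eq2.3}. This needs no further work.

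For the ``only if'' direction, set $f(b):=\sum_{a\in\mathbb{F}_q}y_a\chi_a(b)$ and $S:=\sum_{a\in\mathbb{F}_q}y_a$. Since $\chi_a(0)=1$ for all $a$, we have $f(0)=S$, while by hypothesis $f(b)=0$ for all $b\neq 0$. The key step is to recover each coefficient $y_c$ by pairing $f$ against the conjugate character $\overline{\chi_c}$ and summing over the whole field. Using $\overline{\chi_c}=\chi_{-c}$ and $\chi_a\chi_{-c}=\chi_{a-c}$, I would compute
\[
\sum_{b\in\mathbb{F}_q}f(b)\,\overline{\chi_c(b)}=\sum_{a\in\mathbb{F}_q}y_a\sum_{b\in\mathbb{F}_q}\chi_{a-c}(b)=q\,y_c,
\]
where the inner sum is evaluated by \eqref{eq2.2} (it equals $q$ exactly when $a=c$, the isomorphism $a\mapsto\chi_a$ guaranteeing $a-c=0$ iff $a=c$). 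On the other hand, because $f$ is supported only at $b=0$ and $\overline{\chi_c(0)}=1$, the same sum collapses to $f(0)=S$. Equating the two evaluations yields $y_c=S/q$ for every $c\in\mathbb{F}_q$, so all the $y_a$ coincide with the constant $m:=S/q$, as claimed.

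There is no genuine obstacle here; the argument is the standard Fourier/orthogonality inversion for the additive character group $\widehat{\mathbb{F}_q}$. The only points requiring mild care are the bookkeeping identities $\overline{\chi_c}=\chi_{-c}$ and $\chi_a\chi_{-c}=\chi_{a-c}$, and the observation that knowing $f$ on $\mathbb{F}_q^{\ast}$ together with its single value $f(0)=S$ already determines all Fourier coefficients, which is precisely what forces constancy of the $y_a$.
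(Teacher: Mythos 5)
Your proposal is correct and takes essentially the same route as the paper: the forward direction is the identical appeal to Eq.~(\ref{eq2.3}), and your ``only if'' argument is exactly the Fourier inversion the paper uses, the only difference being that you derive the inversion identity $q\,y_c=\sum_{b}f(b)\overline{\chi_c(b)}$ explicitly via orthogonality (Eq.~(\ref{eq2.2})) rather than quoting it as the paper does. There are no gaps.
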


\begin{proof}
``$\Longleftarrow$'': This is due to the fact that $\sum_{a\in\mathbb{F}_{q}}\chi_{a}(b)=0$
for all $b\in\mathbb{F}_{q}^{\ast}$, as shown in Eq. (\ref{eq2.3}).

``$\Longrightarrow$'': Define a Fourier transformation $F(b)=\sum_{a\in\mathbb{F}_{q}}y_{a}\chi_{a}(b)$, where $b\in\mathbb{F}_{q}$.
Then $F(0)=\sum_{a\in\mathbb{F}_{q}}y_{a}$ and $F(b)=0$ for all $b\in\mathbb{F}_{q}^{\ast}$. Thus, for any $a\in\mathbb{F}_{q}$, we have
\begin{align*}
y_{a}
&=\frac{1}{q}\sum_{b\in\mathbb{F}_{q}}F(b)\overline{\chi_{a}}(b)\\
&=\frac{1}{q}\left(F(0)\overline{\chi_{a}}(0)+\sum_{b\in\mathbb{F}_{q}^{\ast}}F(b)\overline{\chi_{a}}(b)\right)\\
&=\frac{1}{q}\sum_{a\in\mathbb{F}_{q}}y_{a}.
\end{align*}

Therefore, the proof is completed. $\hfill\square$
\end{proof}

\subsection{Construction of $\varepsilon_{q}$-ASIC-POVMs}\label{subsection2.2}

In \cite[Theorem III.3]{Cao2017Two}, the authors constructed a family of $\varepsilon_{q}$-ASIC-POVMs by
using a special kind of $2$-to-$1$ PN functions $f(x)$ that satisfy $f(0)=0$ and $f(x)=f(y)$ iff $x=-y$ for all $x,y\in\mathbb{F}_{q}$.
In this subsection, we will show that all $2$-to-$1$ PN functions can be utilized for constructing
$\varepsilon_{q}$-ASIC-POVMs.

First, let us recall the concepts of $2$-to-$1$ mappings and perfect nonlinear (PN) functions.

\begin{definition}\label{definition2.3}
\rm{(\!\!\cite[Definition 1]{Mesnager2019On})}
Let $f$ be a mapping from a finite set $A$ to a finite set $B$. We call $f$ a \textbf{$2$-to-$1$ mapping}
if it satisfies one of the following properties:
\vspace{-3pt}
\begin{itemize}
\item [1)] $\sharp A$ is even, and for any element of $B$, it has either $2$ or $0$ preimages of $f$;

\vspace{-3pt}

\item [2)] $\sharp A$ is odd, and for all but one element of $B$, it has either $2$ or $0$
preimages of $f$, and the exceptional element has exactly one preimage.
\end{itemize}
\end{definition}

\begin{remark}\label{remark2.4}
One can calculate the cardinality of all $2$-to-$1$ mappings over $\mathbb{F}_{q}$; see Proposition \ref{propositionE} of Appendix. For more properties on $2$-to-$1$ mappings and related applications, we refer the reader to the latest references \cite{Li2021Further,Mesnager2023Further}.
\end{remark}

\begin{remark}\label{remark2.5}
Let $q$ be an odd prime power. Then for any $2$-to-$1$ mapping $f(x)$ over $\mathbb{F}_{q}$,
we can write $\mathbb{F}_{q}$ as $\mathbb{F}_{q}=\{a_{1},a_{2},\ldots,a_{q}\}$ such that
\begin{equation}\label{eq2.8}
f(a_{i})=f(a_{q+2-i}),\ i=2,3,\ldots,\frac{q+1}{2};\ f(a_{1})\notin\left\{f(a_{i})\Big|i=2,3,\ldots,\frac{q+1}{2}\right\}.
\end{equation}
Here, Eq. (\ref{eq2.8}) is called a \emph{$f$-permutation} of $\mathbb{F}_{q}$.
\end{remark}

\begin{definition}\label{definition2.6}
\rm{(\!\!\cite[Definition 2.1]{Davis2008G})}
Let $G$ and $H$ be abelian groups. Let $f: G\rightarrow H$ be a function from $G$ and $H$.
Then the function $f$ is called \textbf{perfect nonlinear (PN)} if
$|\{x\in G|f(x+a)-f(x)=b\}|=\frac{|G|}{|H|}$ for every $(a,b)\in G\setminus\{0\}\times H$.
\end{definition}

\begin{remark}\label{remark2.7}
For any PN function $f(x)$ from $\mathbb{F}_{q}$ to $\mathbb{F}_{q}$, it is easy to see that $\mathbb{F}_{q}=\{f(x+a)-f(x)|x\in\mathbb{F}_{q}\}$
for every $a\in \mathbb{F}_{q}^{\ast}$ (or equivalently, $f(x)$ is a PN function if and only if $f(x+a)-f(x)$ is bijective for every $a\in \mathbb{F}_{q}^{\ast}$).
For the surveys on PN functions (and, more generally, highly nonlinear functions), we refer the reader to \cite{Blondeau2015Perfect,Carlet2004Highly}.
\end{remark}

The following result related to PN functions is a corollary of \cite[Lemma 5.5]{Hall2011Mutually} (see also \cite[Lemma III.2]{Cao2017Two}),
which will be used later in the construction of $\varepsilon_{q}$-ASIC-POVMs.

\begin{lemma}\label{lemma2.8}
Let $f(x)\in\mathbb{F}_{q}[x]$ be a PN function, and let $\chi$ be an arbitrary nontrivial additive character of $\mathbb{F}_{q}$. Then
$$\Big|\sum_{x\in \mathbb{F}_{q}}\chi (af(x)+bx)\Big|=\sqrt{q}$$
for all $a\in\mathbb{F}_{q}^{\ast}$ and $b\in\mathbb{F}_{q}$.
\end{lemma}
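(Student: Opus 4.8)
The plan is to reduce the character sum to a standard Gauss-sum-type computation by exploiting the defining property of PN functions recalled in Remark \ref{remark2.7}. The quantity $\bigl|\sum_{x}\chi(af(x)+bx)\bigr|$ is most naturally controlled by computing its square modulus, so first I would write
\begin{equation*}
\Big|\sum_{x\in\mathbb{F}_{q}}\chi(af(x)+bx)\Big|^{2}
=\sum_{x,y\in\mathbb{F}_{q}}\chi\bigl(a(f(x)-f(y))+b(x-y)\bigr),
\end{equation*}
using that $\overline{\chi(z)}=\chi(-z)$. The goal is then to show this double sum equals exactly $q$ for every $a\in\mathbb{F}_{q}^{\ast}$ and every $b\in\mathbb{F}_{q}$.

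Next I would perform the substitution $y=x+u$, so that $u$ ranges over $\mathbb{F}_{q}$ as $(x,y)$ ranges over all pairs, converting the sum into
\begin{equation*}
\sum_{u\in\mathbb{F}_{q}}\sum_{x\in\mathbb{F}_{q}}\chi\bigl(a(f(x)-f(x+u))-bu\bigr).
\end{equation*}
The $u=0$ term contributes $q$. For each fixed $u\in\mathbb{F}_{q}^{\ast}$, the inner sum over $x$ is where the PN property enters: by Remark \ref{remark2.7}, the map $x\mapsto f(x+u)-f(x)$ is a bijection of $\mathbb{F}_{q}$ for every nonzero $u$, hence so is $x\mapsto f(x)-f(x+u)$. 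Therefore $a(f(x)-f(x+u))$ runs over all of $\mathbb{F}_{q}$ exactly once (since $a\neq 0$), and the inner sum becomes $\chi(-bu)\sum_{w\in\mathbb{F}_{q}}\chi(w)=0$ by the orthogonality relation (\ref{eq2.2}) with the nontrivial character $\chi$. Thus every $u\neq 0$ term vanishes, leaving the double sum equal to $q$, and taking square roots gives $\bigl|\sum_{x}\chi(af(x)+bx)\bigr|=\sqrt{q}$.

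I do not expect a genuinely hard step here; the argument is a clean application of the PN bijectivity property followed by additive character orthogonality. The only points requiring slight care are: verifying that reindexing by the bijection $x\mapsto f(x)-f(x+u)$ legitimately turns the inner sum into a full sum of $\chi$ over $\mathbb{F}_{q}$ (which uses both that the map is a bijection and that $a\neq 0$ so multiplication by $a$ preserves the bijection), and keeping track of the fact that the stated result asserts the value is exactly $\sqrt q$ rather than merely bounded by it. Since this is precisely the Weil-type equality that the referenced \cite[Lemma 5.5]{Hall2011Mutually} and \cite[Lemma III.2]{Cao2017Two} record, the proof is essentially the self-contained Gauss-sum computation above and requires no deep character-sum machinery.
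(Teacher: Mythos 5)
Your proof is correct, but it takes a different route from the paper in one basic sense: the paper does not prove Lemma \ref{lemma2.8} at all. It states the lemma as a direct corollary of \cite[Lemma 5.5]{Hall2011Mutually} (see also \cite[Lemma III.2]{Cao2017Two}) and moves on, so there is no internal argument to compare against. What you have supplied is the standard self-contained derivation that those cited references record: expand the square modulus into a double sum, reindex by $y=x+u$, split off the $u=0$ diagonal (contributing exactly $q$), and kill every $u\neq 0$ term by combining the PN bijectivity of $x\mapsto f(x+u)-f(x)$ (Remark \ref{remark2.7}) with the orthogonality relation (\ref{eq2.2}) applied to the nontrivial character $\chi$. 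Each step checks out, including the two points you flag: negation and multiplication by $a\in\mathbb{F}_{q}^{\ast}$ preserve bijectivity, so the inner sum really is a full sum of $\chi$ over $\mathbb{F}_{q}$, and the argument yields the exact value $q$ for the square modulus rather than an upper bound, which is what the lemma asserts. The trade-off is the usual one: the paper's citation keeps the exposition short, while your computation makes the lemma independent of \cite{Hall2011Mutually} and \cite{Cao2017Two} at the cost of a few lines; your version also makes transparent exactly where the hypotheses $a\neq 0$ and $\chi$ nontrivial are used.
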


Let us give a useful property on PN functions in the following lemma.

\begin{lemma}\label{lemma2.9}
Let $\mathbb{F}_{q}=\{a_{1},a_{2},\ldots,a_{q}\}$ and let $f(x)\in\mathbb{F}_{q}[x]$ be a PN function.
For an arbitrary nontrivial additive character $\chi$ of $\mathbb{F}_{q}$, if
\begin{equation}\label{eq2.9}
\sum_{a\in \mathbb{F}_{q}}\sum_{b\in\mathbb{F}_{q}^{\ast}}k_{a,b}\chi\Big(a\big(f(a_{i})-f(a_{j})\big)+b(a_{i}-a_{j})\Big)=0
\end{equation}
for all $1\leq i\neq j\leq q$, then $k_{a,b}\equiv0$ for all $a\in\mathbb{F}_{q}$ and $b\in\mathbb{F}_{q}^{\ast}$.
\end{lemma}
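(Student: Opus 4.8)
The plan is to treat the double sum over $a \in \mathbb{F}_q$ and $b \in \mathbb{F}_q^\ast$ as a single linear combination of additive characters and invoke the linear-independence machinery that Lemma~\ref{lemma2.1} encodes. The key observation is that the PN property of $f(x)$ guarantees that, as the pair $(i,j)$ ranges over all pairs with $i \neq j$, the pairs $\big(f(a_i)-f(a_j),\, a_i-a_j\big)$ are rich enough to force every coefficient $k_{a,b}$ to vanish.

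**First** I would fix $b_0 := a_i - a_j$ and reorganize the condition so that the inner structure becomes visible. Since $\chi$ is a fixed nontrivial additive character, we may write $\chi = \chi_1$ in the notation of Eq.~(\ref{eq2.1}) without loss of generality (the general nontrivial case follows by absorbing a scalar into $a$ and $b$), so that $\chi\big(a(f(a_i)-f(a_j)) + b(a_i-a_j)\big) = \chi_{a(f(a_i)-f(a_j))+b(a_i-a_j)}(1)$, but more usefully $\chi(c) = \zeta_p^{\mathrm{tr}(c)}$. The plan is to view the left-hand side of Eq.~(\ref{eq2.9}) as a function of the displacement parameters. Concretely, for each ordered pair $(i,j)$ set $u = a_i - a_j \in \mathbb{F}_q^\ast$ and $v = f(a_i) - f(a_j) \in \mathbb{F}_q$; then Eq.~(\ref{eq2.9}) reads $\sum_{a,b} k_{a,b}\,\chi(av + bu) = 0$.

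**Next** I would exploit the PN property exactly as recorded in Remark~\ref{remark2.7}: for every fixed nonzero displacement $u = a_i - a_j$, the map $x \mapsto f(x+u) - f(x)$ is a bijection of $\mathbb{F}_q$, so as $i$ (hence $a_i$) varies with $u$ held fixed, the value $v = f(a_i) - f(a_j)$ ranges over \emph{all} of $\mathbb{F}_q$. This is the crucial leverage. Thus for each fixed $u \in \mathbb{F}_q^\ast$ I obtain the full system of equations $\sum_{a \in \mathbb{F}_q}\Big(\sum_{b \in \mathbb{F}_q^\ast} k_{a,b}\,\chi(bu)\Big)\chi(av) = 0$ valid for \emph{every} $v \in \mathbb{F}_q$. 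Setting $y_a := \sum_{b \in \mathbb{F}_q^\ast} k_{a,b}\,\chi(bu)$, this says $\sum_{a} y_a\,\chi(av) = 0$ for all $v \in \mathbb{F}_q$ (including $v=0$, and certainly for all $v \in \mathbb{F}_q^\ast$), which by Lemma~\ref{lemma2.1} forces $y_a \equiv m_u$ to be constant in $a$; taking $v=0$ and summing then pins down the constant. I would then read off $\sum_{b \in \mathbb{F}_q^\ast} k_{a,b}\,\chi(bu) = \sum_{b \in \mathbb{F}_q^\ast} k_{a',b}\,\chi(bu)$ for all $a, a'$ and every $u \in \mathbb{F}_q^\ast$; applying Lemma~\ref{lemma2.1} a second time in the variable $b$ (noting $\chi(bu) = \chi_u(b)$ up to the standard identification) shows $k_{a,b}$ is independent of $a$, so $k_{a,b} = \ell_b$ for some constants $\ell_b$. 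Finally, reinserting this into $\sum_{b \in \mathbb{F}_q^\ast} \ell_b\,\chi(bu) = 0$ for all $u \in \mathbb{F}_q^\ast$ and one more appeal to Lemma~\ref{lemma2.1} (with the index set restricted to $\mathbb{F}_q^\ast$, where the constant solution must itself vanish since $b=0$ is excluded) yields $\ell_b \equiv 0$, hence $k_{a,b} \equiv 0$.

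**The main obstacle** I anticipate is bookkeeping the two separate applications of Lemma~\ref{lemma2.1} cleanly: the first application is in the ``$v$-variable'' and relies entirely on the PN surjectivity to produce equations for all $v$, while the second is in the genuine character variable $b$ and must correctly handle the restriction $b \in \mathbb{F}_q^\ast$. The delicate point is that Lemma~\ref{lemma2.1} as stated concludes only that the coefficients are \emph{equal} (to a common constant $m$), not that they are zero; so the argument must be closed by using that the summation over $b$ omits the index $b=0$, which breaks the constant-solution symmetry and forces the common constant to be $0$. I would state this final reduction carefully rather than treat it as automatic.
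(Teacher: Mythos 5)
Your proposal is correct and follows essentially the same route as the paper's proof: both exploit the PN property to obtain, for each fixed nonzero difference $u=a_i-a_j$, equations indexed by every value $v=f(a_i)-f(a_j)\in\mathbb{F}_{q}$, then apply Lemma \ref{lemma2.1} twice (once in the $a$-variable, once in the $b$-variable), closing the argument via the zero-extension $k_{a,0}:=0$ which forces the common constants to vanish. Your intermediate step deducing that $k_{a,b}$ is independent of $a$ is a harmless but unnecessary detour---once you know $y_{a}\equiv 0$ for all $a$ and all $u\in\mathbb{F}_{q}^{\ast}$, you can apply Lemma \ref{lemma2.1} in the $b$-variable directly for each fixed $a$, exactly as the paper does.
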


\begin{proof}
We may write $\chi=\chi_{r}$ for some $0\neq r\in\mathbb{F}_{q}$, where $\chi_{r}$ is defined as in Eq. (\ref{eq2.1}).
Put $k_{a,0}=0$ for all $a\in\mathbb{F}_{q}$, then Eq. (\ref{eq2.9}) is equivalent to
\begin{equation}\label{eq2.10}
\sum_{a\in\mathbb{F}_{q}}\chi_{r}\Big(a\big(f(a_{i})-f(a_{j})\big)\Big)\sum_{b\in\mathbb{F}_{q}}k_{a,b}\chi_{r}\big(b(a_{i}-a_{j})\big)=0
\end{equation}
for all distinct elements $a_{i}$ and $a_{j}$ in $\mathbb{F}_{q}$.
For any $c\in\mathbb{F}_{q}^{\ast}$ and $d\in\mathbb{F}_{q}$, by the definition of PN functions over $\mathbb{F}_{q}$,
there exists $m\in\mathbb{F}_{q}$ such that $f(m+c)-f(m)=d$. That is, for any $c\in\mathbb{F}_{q}^{\ast}$ and $d\in\mathbb{F}_{q}$,
there exist $a_{s}:=m+c$ and $a_{t}:=m$ such that
\begin{align*}
0
&=\sum_{a\in\mathbb{F}_{q}}\chi_{r}\Big(a\big(f(a_{s})-f(a_{t})\big)\Big)\sum_{b\in\mathbb{F}_{q}}k_{a,b}\chi_{r}\big(b(a_{s}-a_{t})\big)\\
&=\sum_{a\in\mathbb{F}_{q}}\chi_{r}(ad)\sum_{b\in\mathbb{F}_{q}}k_{a,b}\chi_{r}(bc).
\end{align*}

Set $h_{a,c}=\sum_{b\in\mathbb{F}_{q}^{\ast}}k_{a,b}\chi_{r}(bc)$. Then, for all $c\in\mathbb{F}_{q}^{\ast}$ and $d\in\mathbb{F}_{q}$,
\begin{equation}\label{eq2.11}
\sum_{a\in\mathbb{F}_{q}}h_{a,c}\chi_{r}(ad)=0\Longrightarrow\sum_{a\in\mathbb{F}_{q}}h_{\frac{a}{r},c}\chi_{a}(d)=0.
\end{equation}
For any fixed $c_{0}\in\mathbb{F}_{q}^{\ast}$, by Lemma \ref{lemma2.1} there exists $m\in\mathbb{C}$ such that $h_{\frac{a}{r},c_{0}}\equiv m$ for all $a\in\mathbb{F}_{q}$. Further, by taking $c=c_{0}$ and $d=0$ in Eq. (\ref{eq2.11}) we obtain
$qm=0$, then $m=0$, i.e., $h_{\frac{a}{r},c_{0}}\equiv0$ for all $a\in\mathbb{F}_{q}$.
This implies that $h_{a,c_{0}}\equiv0$ and thus $h_{a,c}\equiv0$ for all $a\in\mathbb{F}_{q}$ and $c\in\mathbb{F}_{q}^{\ast}$. Namely,
$\sum_{b\in\mathbb{F}_{q}^{\ast}}k_{a,b}\chi_{r}(bc)=\sum_{b\in\mathbb{F}_{q}}k_{a,b}\chi_{r}(bc)=0$, which can be rewritten as
\begin{equation*}
\sum_{b\in\mathbb{F}_{q}}k_{a,b}\chi_{b}(rc)=0
\end{equation*}
for all $a\in\mathbb{F}_{q}$ and $c\in\mathbb{F}_{q}^{\ast}$.
For any fixed $a_{0}\in\mathbb{F}_{q}$, by using Lemma \ref{lemma2.1} again, there exists $\eta\in\mathbb{C}$ such that $k_{a_{0},b}\equiv\eta$ for all $b\in\mathbb{F}_{q}$. Since $k_{a_{0},0}=0$, we obtain $k_{a_{0},b}\equiv0$ for all $b\in\mathbb{F}_{q}$.
Therefore, $k_{a,b}\equiv0$ for all $a\in\mathbb{F}_{q}$ and $b\in\mathbb{F}_{q}$, which completes the proof.  $\hfill\square$
\end{proof}

\vspace{6pt}

By using general $2$-to-$1$ PN functions (i.e., function that are both $2$-to-$1$ mappings and PN functions),
let us provide the corresponding construction of $\varepsilon_{q}$-ASIC-POVMs in the following theorem.

\begin{theorem}\label{theorem2.10}
Let $q$ be an odd prime power. Let $f(x)\in\mathbb{F}_{q}[x]$ be any $2$-to-$1$
PN function, and let Eq. (\ref{eq2.8}) be a $f$-permutation of $\mathbb{F}_{q}=\{a_{1},a_{2},\ldots,a_{q}\}$.
Suppose $\chi$ is an arbitrary nontrivial additive character of $\mathbb{F}_{q}$.
For each $a\in \mathbb{F}_{q}$ and $b\in\mathbb{F}_{q}^{\ast}$, define $|v_{a,b}\rangle=\frac{1}{\sqrt{q}}\big(\chi(af(a_{i})+ba_{i})\big)^{T}_{i=1,\ldots,q}\in \mathbb{C}^{q}$.
Define
\begin{equation}\label{eq2.12}
\mathcal{A}=\{|v_{a,b}\rangle:a\in\mathbb{F}_{q},\ b\in\mathbb{F}_{q}^{\ast}\}\cup \{|e_{1}\rangle\cup\ldots\cup|e_{q}\rangle\},
\end{equation}
where $|e_{i}\rangle\in\mathbb{C}^{q}$ is the unit vector whose $i$-th component is $1$, and $0$ in other components.
Denote $E=\sum_{i=1}^{q^{2}}E_{i}$ with $E_{i}:=\frac{1}{q}|v_{i}\rangle\langle v_{i}|$ for $|v_{i}\rangle\in\mathcal{A}$.
Then $\mathcal{M}=\{M_{i}:=E^{-\frac{1}{2}}E_{i}E^{-\frac{1}{2}}|i=1,2,\ldots,q^{2}\}$ forms an $\varepsilon_{q}$-ASIC-POVM, where

1) For $M_{i}=\frac{1}{q}E^{-\frac{1}{2}}|v_{a,b}\rangle\langle v_{a,b}|E^{-\frac{1}{2}}$ and
$M_{j}=\frac{1}{q}E^{-\frac{1}{2}}|v_{c,d}\rangle\langle v_{c,d}|E^{-\frac{1}{2}}$ with $a\neq c$,
\begin{equation*}
q^{2}\mathrm{Tr}(M_{i}M_{j})\leq \Bigg(\frac{q^{2}\sqrt{q}+q^{2}-q+1}{(q^{2}-1)q}\Bigg)^{2}=\frac{1+\varepsilon_{q}}{q},
\end{equation*}
where the infinitesimal $\varepsilon_{q}$ is given by
\begin{equation*}
\varepsilon_{q}=\frac{(q^{2}-q+\sqrt{q}+1)(2q^{2}\sqrt{q}+q^{2}-q-\sqrt{q}+1)}{q(q^{2}-1)^{2}}.
\end{equation*}
Besides, we have
\begin{equation*}
\left|\frac{1}{q+1}-\left(\frac{q^{2}\sqrt{q}+q^{2}-q+1}{(q^{2}-1)q}\right)^{2}\right|=O(q^{-\frac{3}{2}}).
\end{equation*}

\vspace{6pt}

2) For $M_{i}=\frac{1}{q}E^{-\frac{1}{2}}|v_{a,b}\rangle\langle v_{a,b}|E^{-\frac{1}{2}}$ and
$M_{j}=\frac{1}{q}E^{-\frac{1}{2}}|v_{c,d}\rangle\langle v_{c,d}|E^{-\frac{1}{2}}$ with $a=c$ and $b\neq d$,
\begin{equation*}
q^{2}\mathrm{Tr}(M_{i}M_{j})\leq \Bigg(\frac{q^{2}-q+1}{(q^{2}-1)q}\Bigg)^{2}\leq\frac{1+\varepsilon_{q}}{q},
\end{equation*}
where the infinitesimal $\varepsilon_{q}$ is given by
\begin{equation*}
\varepsilon_{q}=\frac{(q^{2}-q+1)(2q^{2}\sqrt{q}+q^{2}-q-2\sqrt{q}+1)}{q(q^{2}-1)^{2}}.
\end{equation*}
Besides, we have
\begin{equation*}
\left|\frac{1}{q+1}-\left(\frac{q^{2}-q+1}{(q^{2}-1)q}\right)^{2}\right|=O(q^{-1}).
\end{equation*}

\vspace{6pt}

3) For $M_{i}=\frac{1}{q}E^{-\frac{1}{2}}|v_{a,b}\rangle\langle v_{a,b}|E^{-\frac{1}{2}}$ and
$M_{j}=\frac{1}{q}E^{-\frac{1}{2}}|e_{j}\rangle\langle e_{j}|E^{-\frac{1}{2}}$,
\begin{equation*}
q^{2}\mathrm{Tr}(M_{i}M_{j})\leq \left(\frac{q^{2}+q+1}{(q^{2}-1)\sqrt{q}}\right)^{2}=\frac{1+\varepsilon_{q}}{q},
\end{equation*}
where the infinitesimal $\varepsilon_{q}$ is given by
\begin{equation*}
\varepsilon_{q}=\frac{q(q+2)(2q+1)}{(q^{2}-1)^{2}}.
\end{equation*}
Besides, we have
\begin{equation*}
\left|\frac{1}{q+1}-\left(\frac{q^{2}+q+1}{(q^{2}-1)\sqrt{q}}\right)^{2}\right|=O(q^{-2}).
\end{equation*}

\vspace{6pt}

4) For $M_{i}=\frac{1}{q}E^{-\frac{1}{2}}|e_{i}\rangle\langle e_{i}|E^{-\frac{1}{2}}$ and
$M_{j}=\frac{1}{q}E^{-\frac{1}{2}}|e_{j}\rangle\langle e_{j}|E^{-\frac{1}{2}}$ with $i\neq j$,
\begin{equation*}
q^{2}\mathrm{Tr}(M_{i}M_{j})\leq \Big(\frac{q}{q^{2}-1}\Big)^{2}\leq \frac{1+\varepsilon_{q}}{q},
\end{equation*}
where the infinitesimal $\varepsilon_{q}$ is given by
\begin{equation*}
\varepsilon_{q}=\frac{q\sqrt{q}(2q^{2}+q\sqrt{q}-2)}{(q^{2}-1)^{2}}.
\end{equation*}
Besides, we have
\begin{equation*}
\left|\frac{1}{q+1}-\left(\frac{q}{q^{2}-1}\right)^{2}\right|=O(q^{-1}).
\end{equation*}
\end{theorem}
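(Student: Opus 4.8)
The plan is to treat $\mathcal{M}$ as the canonical POVM normalization of the frame $\mathcal{A}$. Writing $|w_i\rangle=E^{-\frac{1}{2}}|v_i\rangle$, one has $M_i=\frac{1}{q}|w_i\rangle\langle w_i|$, so that $\sum_{i=1}^{q^2}M_i=E^{-\frac{1}{2}}\big(\sum_{i}E_i\big)E^{-\frac{1}{2}}=E^{-\frac{1}{2}}EE^{-\frac{1}{2}}=I_q$ the moment $E$ is shown to be invertible; this settles the completeness condition. Moreover $q^2\mathrm{Tr}(M_iM_j)=|\langle w_i|w_j\rangle|^2=|\langle v_i|E^{-1}|v_j\rangle|^2$, so the whole theorem reduces to computing $E^{-1}$ and the overlaps it produces. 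First I would compute $E$ entrywise: expanding $\sum_{a,b}\frac{1}{q}|v_{a,b}\rangle\langle v_{a,b}|+\frac{1}{q}\sum_{i}|e_i\rangle\langle e_i|$ and applying the orthogonality relations (\ref{eq2.2})--(\ref{eq2.3}), every off-diagonal $(i,j)$ entry collapses to $-\frac{1}{q}$ exactly when $f(a_i)=f(a_j)$ and vanishes otherwise, while each diagonal entry equals $1$. By the $f$-permutation (\ref{eq2.8}), $f(a_i)=f(a_j)$ with $i\neq j$ occurs precisely for the pairs $\{i,q+2-i\}$ with $i\geq2$. Hence $E=I_q-\frac{1}{q}P$, where $P$ is the involutive partial-permutation matrix that interchanges each paired index and annihilates the unpaired index $1$.

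Next I would diagonalize $P$: it acts as $0$ on $|e_1\rangle$ and as $\pm1$ on the symmetric and antisymmetric combinations attached to each pair, so that $P^2=I_q-|e_1\rangle\langle e_1|$ and the spectrum of $E=I_q-\frac{1}{q}P$ is $\{1,\frac{q-1}{q},\frac{q+1}{q}\}\subset(0,\infty)$. Thus $E$ is positive definite, $E^{-\frac{1}{2}}$ exists, and solving on the three eigenspaces yields the explicit inverse $E^{-1}=I_q+\frac{q}{q^2-1}P+\frac{1}{q^2-1}P^2$.

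With this in hand I would expand $\langle v_i|E^{-1}|v_j\rangle=\langle v_i|v_j\rangle+\frac{q}{q^2-1}\langle v_i|P|v_j\rangle+\frac{1}{q^2-1}\langle v_i|P^2|v_j\rangle$ and estimate by the triangle inequality in each of the four cases. The overlaps $\langle v_i|v_j\rangle$ are found exactly: by Lemma \ref{lemma2.8} they have modulus $\frac{1}{\sqrt{q}}$ when two vectors of type $|v_{a,b}\rangle$ differ in their $f$-coefficient (case 1); any overlap involving some $|e_j\rangle$ is a single entry of modulus $\frac{1}{\sqrt{q}}$ (case 3); and the rest vanish by (\ref{eq2.2})--(\ref{eq2.3}) (cases 2 and 4). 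Since $P^2=I_q-|e_1\rangle\langle e_1|$, the term $\langle v_i|P^2|v_j\rangle$ equals $\langle v_i|v_j\rangle$ minus a correction of modulus at most $\frac{1}{q}$, and so is controlled the same way. The only genuinely new point, compared with the special functions of \cite{Cao2017Two}, is the middle term: for a general $2$-to-$1$ PN function the partner map $a_i\mapsto a_{q+2-i}$ need not be $x\mapsto-x$, so $\langle v_i|P|v_j\rangle$ is no longer a Weil-type sum amenable to Lemma \ref{lemma2.8}. The crucial observation---and the main obstacle---is that one does not need it to be: the trivial bound $|\langle v_i|P|v_j\rangle|\leq\frac{q-1}{q}$ (a sum of $q-1$ unit-modulus entries divided by $q$) already suffices, because its coefficient $\frac{q}{q^2-1}$ is $O(q^{-1})$, so the $P$-term contributes only $O(q^2)$ to a numerator dominated by the $O(q^{5/2})$ contribution of $\langle v_i|v_j\rangle$. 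Collecting the three estimates over the common denominator $(q^2-1)q$ reproduces the stated bounds; for example, in case 1 the numerator is $(q^2\sqrt{q}-\sqrt{q})+(q^2-q)+(\sqrt{q}+1)=q^2\sqrt{q}+q^2-q+1$. Squaring each bound, equating it to $\frac{1+\varepsilon_q}{q}$, and factoring $A^2-q(q^2-1)^2=\big(A-(q^2-1)\sqrt{q}\big)\big(A+(q^2-1)\sqrt{q}\big)$ then gives the displayed closed forms for $\varepsilon_q$, each of which visibly tends to $0$.

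It remains to verify informational completeness and the closeness estimates. Since $E^{-\frac{1}{2}}$ is invertible, the operators $M_i$ are linearly independent if and only if the rank-one operators $\{|v_{a,b}\rangle\langle v_{a,b}|\}\cup\{|e_i\rangle\langle e_i|\}$ are linearly independent in $M_q(\mathbb{C})$; setting a vanishing combination $\sum_{a,b}k_{a,b}|v_{a,b}\rangle\langle v_{a,b}|+\sum_{i}\lambda_i|e_i\rangle\langle e_i|=0$ and reading off its $(i,j)$ entries for $i\neq j$ places us exactly in the hypothesis of Lemma \ref{lemma2.9}, forcing all $k_{a,b}=0$, after which the diagonal entries force every $\lambda_i=0$. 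Finally, each estimate $\big|\frac{1}{q+1}-(\mathrm{bound})^2\big|=O(q^{-\alpha})$ follows from a routine asymptotic expansion of the explicit expression already obtained, compared against the exact SIC value $\frac{1}{q+1}$.
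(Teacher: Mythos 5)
Your proposal is correct and, in substance, follows the same route as the paper's proof: the same entrywise computation of $E$ from the orthogonality relations (\ref{eq2.2})--(\ref{eq2.3}), the same three-term expansion of $\langle v_i|E^{-1}|v_j\rangle$ estimated by the triangle inequality, with Lemma \ref{lemma2.8} giving $|\langle v_{a,b}|v_{c,d}\rangle|=\frac{1}{\sqrt{q}}$ when $a\neq c$ and the trivial bound $\frac{q-1}{q}$ controlling the paired-index sum, and the same appeal to Lemma \ref{lemma2.9} together with the diagonal entries for informational completeness. The one place you genuinely diverge is the treatment of $E$ itself: you write $E=I_q-\frac{1}{q}P$ with $P=\sum_{i=2}^{q}R_{i,q+2-i}$, observe $P|e_1\rangle=0$, $P^{2}=I_q-R_{1,1}$ and $P^{3}=P$, and obtain both positive definiteness (spectrum $\{1,\frac{q-1}{q},\frac{q+1}{q}\}$) and $E^{-1}=I_q+\frac{q}{q^{2}-1}P+\frac{1}{q^{2}-1}P^{2}$ in one stroke, whereas the paper reaches the same conclusions via a block decomposition, a determinant recursion for $F_{q-1}$, a check of leading principal minors, and a direct block inversion; your spectral argument is shorter and less computational, and since $P^{2}=I_q-R_{1,1}$ your inverse coincides exactly with the paper's Eq. (\ref{eq2.14}), so all four case bounds come out the same. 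In the $|v_{a,b}\rangle$-versus-$|e_j\rangle$ case your grouping even yields the marginally sharper numerator $q^{2}+q$ in place of $q^{2}+q+1$, which still implies the stated bound, the stated equality $\bigl(\frac{q^{2}+q+1}{(q^{2}-1)\sqrt{q}}\bigr)^{2}=\frac{1+\varepsilon_{q}}{q}$ being a direct computation. One cosmetic caveat: in the two cases where the theorem asserts only an inequality (cases 2 and 4), the displayed $\varepsilon_{q}$ does not arise from your ``square and factor'' recipe applied to the case bound $B$ (that recipe would give the negative quantity $qB^{2}-1$); the paper manufactures it by padding $\sqrt{q}B$ to $1+\sqrt{q}B$ before squaring, but since any positive infinitesimal dominating $qB^{2}-1$ verifies the claim, this does not affect correctness.
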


\begin{proof}
Since Eq. (\ref{eq2.8}) is a $f$-permutation of $\mathbb{F}_{q}=\{a_{1},a_{2},\ldots,a_{q}\}$, we have
$f(a_{i})=f(a_{q+2-i})$ for $i=2,3,\ldots,\frac{q+1}{2}$ and $f(a_{1})\notin\{f(a_{i})|i=2,3,\ldots,\frac{q+1}{2}\}$.

First, let us prove that $\mathcal{M}$ satisfies the completeness/POVM condition of an $\varepsilon_{q}$-ASIC-POVM. We have
\begin{equation*}
E=\sum_{i=1}^{q^{2}}E_{i}=\frac{1}{q}\sum_{a\in \mathbb{F}_{q}}\sum_{b\in \mathbb{F}_{q}^{\ast}}|v_{a,b}\rangle\langle v_{a,b}|+\frac{1}{q}I_{q}.
\end{equation*}
For any $1\leq i,j\leq q$, the $(i,j)$-th entry of $E$ is
\begin{equation*}
E(i,j)=\frac{1}{q^{2}}\sum_{a\in \mathbb{F}_{q}}\chi\Big(a\big(f(a_{i})-f(a_{j})\big)\Big)\sum_{b\in \mathbb{F}_{q}^{\ast}}\chi\big(b(a_{i}-a_{j})\big)+\frac{\delta_{i,j}}{q}.
\end{equation*}
Hence, we know that

(a) For $i=1,2,\ldots,q$, we have $E(i,i)=\frac{1}{q^{2}}\cdot q\cdot (q-1)+\frac{1}{q}=1$.

(b) For $i=2,3,\ldots,\frac{q+1}{2}$, we have
\begin{equation*}
E(i,q+2-i)=\frac{1}{q^{2}}\sum_{a\in \mathbb{F}_{q}}\chi\Big(a\big(f(a_{i})-f(a_{q+2-i})\big)\Big)\sum_{b\in \mathbb{F}_{q}^{\ast}}\chi\big(b(a_{i}-a_{q+2-i})\big)=\frac{1}{q^{2}}\cdot q\cdot (-1)=-\frac{1}{q}.
\end{equation*}

(c) For $i\neq j$ and $i+j\neq q+2$, we have $E(i,j)=0$.

Therefore, we obtain that
\begin{align*}
E=\left[
\begin{array}{cc}
1&\mathbf{0}_{1\times(q-1)}\\
\mathbf{0}_{(q-1)\times1}&F_{q-1}\\
\end{array}\right],
\end{align*}
where
\begin{align}\label{eq2.13}
F_{q-1}=\left[
\begin{array}{cc}
I_{\frac{q-1}{2}}&-\frac{1}{q}J_{\frac{q-1}{2}}\\
-\frac{1}{q}J_{\frac{q-1}{2}}&I_{\frac{q-1}{2}}\\
\end{array}\right]
\end{align}
in which $I_{\frac{q-1}{2}}$ is an $\frac{q-1}{2}\times\frac{q-1}{2}$ identity matrix, and
$J_{\frac{q-1}{2}}:=(\delta_{i,\frac{q+1}{2}-i})_{i=1}^{\frac{q-1}{2}}$ is an $\frac{q-1}{2}\times\frac{q-1}{2}$ flip matrix.
By $\mathrm{det}(F_{q-1})=(1-\frac{1}{q^{2}})\mathrm{det}(F_{q-3})$, we know that
$\mathrm{det}(F_{q-1})=(1-\frac{1}{q^{2}})^{\frac{q-1}{2}}$.
It is straightforward to verify that all leading principal minors of $F_{q-1}$ are positive. Then, $F_{q-1}$ is positive definite, and so is $E$.
Hence, $E^{-1}$ exists and it is positive definite as well, which implies that we can obtain the uniquely determined positive definite matrix $E^{-\frac{1}{2}}$, and thus
\begin{equation*}
\sum_{i=1}^{q^{2}}M_{i}=E^{-\frac{1}{2}}\Big(\sum_{i=1}^{q^{2}}E_{i}\Big)E^{-\frac{1}{2}}=I_{q}.
\end{equation*}
Therefore, $\mathcal{M}$ satisfies the completeness/POVM condition of an $\varepsilon_{q}$-ASIC-POVM.

Next, let us prove that $\mathcal{M}$ satisfies the approximate symmetry condition of an $\varepsilon_{q}$-ASIC-POVM.
It follows from Eq. (\ref{eq2.13}) that
\begin{equation*}
F_{q-1}^{-1}=\left[
\begin{array}{cc}
\frac{q^{2}}{q^{2}-1}I_{\frac{q-1}{2}}&\frac{q}{q^{2}-1}J_{\frac{q-1}{2}}\\
\frac{q}{q^{2}-1}J_{\frac{q-1}{2}}&\frac{q^{2}}{q^{2}-1}I_{\frac{q-1}{2}}\\
\end{array}\right].
\end{equation*}
Then, we have
\begin{align}\label{eq2.14}
E^{-1}&= \left[
\begin{array}{cc}
1&\mathbf{0}_{1\times(q-1)}\\
\mathbf{0}_{(q-1)\times1}&F_{q-1}^{-1}\\
\end{array}\right]\nonumber\\
&=\frac{q^{2}}{q^{2}-1}I_{q}-\frac{1}{q^{2}-1}R_{1,1}+\frac{q}{q^{2}-1}\sum_{i=2}^{q}R_{i,q+2-i},
\end{align}
where $R_{i,j}$ denotes the $q\times q$ matrix whose $(i,j)$-th entry is $1$, and $0$ elsewhere.

{\bfseries Case 1:} For $M_{i}=\frac{1}{q}E^{-\frac{1}{2}}|v_{a,b}\rangle\langle v_{a,b}|E^{-\frac{1}{2}}$ and
$M_{j}=\frac{1}{q}E^{-\frac{1}{2}}|v_{c,d}\rangle\langle v_{c,d}|E^{-\frac{1}{2}}$, where $(a,b)\neq(c,d)$, we have
\begin{equation*}
\begin{split}
q^{2}\mathrm{Tr}(M_{i}M_{j})
&=|\langle v_{a,b}|E^{-1}|v_{c,d}\rangle|^{2}\\\nonumber
&=\Big|\frac{q^{2}}{q^{2}-1}\langle v_{a,b}|v_{c,d}\rangle-\frac{1}{q^{2}-1}\langle v_{a,b}|R_{1,1}|v_{c,d}\rangle+\frac{q}{q^{2}-1}\sum_{i=2}^{q}\langle v_{a,b}|R_{i,q+2-i}|v_{c,d}\rangle\Big|^{2}.
\end{split}
\end{equation*}
One can calculate that
\begin{equation*}
\langle v_{a,b}|v_{c,d}\rangle=\frac{1}{q}\sum_{i=1}^{q}\chi\big((c-a)f(a_{i})+(d-b)a_{i}\big),
\end{equation*}
\begin{equation*}
\big|\langle v_{a,b}|R_{1,1}|v_{c,d}\rangle\big|=\frac{1}{q}\Big|\chi\big((c-a)f(a_{1})+(d-b)a_{1}\big)\Big|\leq \frac{1}{q},
\end{equation*}
\begin{equation*}
\Big|\sum_{i=2}^{q}\langle v_{a,b}|R_{i,q+2-i}|v_{c,d}\rangle\Big|=\frac{1}{q}\Big|\sum_{i=2}^{q}\chi\big(cf(a_{q+2-i})+da_{q+2-i}-af(a_{i})-ba_{i}\big)\Big|
\leq \frac{q-1}{q}.
\end{equation*}

{\bfseries Case 1.1:} When $a\neq c$, it follows from Lemma \ref{lemma2.8} that $\big|\langle v_{a,b}|v_{c,d}\rangle\big|=\frac{1}{q}\cdot\sqrt{q}=\frac{1}{\sqrt{q}}$.
In this case,
\begin{align*}
q^{2}\mathrm{Tr}(M_{i}M_{j})
&\leq \Big(\frac{q^{2}}{q^{2}-1}\cdot \frac{1}{\sqrt{q}}+\frac{1}{q^{2}-1}\cdot \frac{1}{q}+\frac{q}{q^{2}-1}\cdot \frac{q-1}{q}\Big)^{2}\\
&=\Bigg(\frac{q^{2}\sqrt{q}+q^{2}-q+1}{(q^{2}-1)q}\Bigg)^{2}\\
&=\Bigg(\frac{\frac{q^{2}}{q^{2}-1}+\frac{1}{\sqrt{q}(q^{2}-1)}+\frac{\sqrt{q}}{q+1}}{\sqrt{q}}\Bigg)^{2}
=\left(\frac{1+\tilde{\varepsilon}_{q}}{\sqrt{q}}\right)^{2}=\frac{1+\varepsilon_{q}}{q},
\end{align*}
where the two infinitesimals $\tilde{\varepsilon}_{q}$ and $\varepsilon_{q}$ are given by
\begin{equation*}
\tilde{\varepsilon}_{q}=\frac{q^{2}-q+\sqrt{q}+1}{\sqrt{q}(q^{2}-1)},
\end{equation*}
\begin{equation*}
\varepsilon_{q}=\tilde{\varepsilon}_{q}^{2}+2\tilde{\varepsilon}_{q}
=\frac{(q^{2}-q+\sqrt{q}+1)(2q^{2}\sqrt{q}+q^{2}-q-\sqrt{q}+1)}{q(q^{2}-1)^{2}}.
\end{equation*}
Besides, we have
\begin{equation*}
\left|\frac{1}{q+1}-\left(\frac{q^{2}\sqrt{q}+q^{2}-q+1}{(q^{2}-1)q}\right)^{2}\right|=O(q^{-\frac{3}{2}}).
\end{equation*}

{\bfseries Case 1.2:} When $a=c$ and $b\neq d$, we have $\langle v_{a,b}|v_{c,d}\rangle=\frac{1}{q}\sum_{i=1}^{q}\chi\big((d-b)a_{i}\big)=0$.
Similar to {\bfseries Case 1.1}, we immediately obtain
\begin{align*}
q^{2}\mathrm{Tr}(M_{i}M_{j})
&\leq \Big(\frac{1}{q^{2}-1}\cdot \frac{1}{q}+\frac{q}{q^{2}-1}\cdot \frac{q-1}{q}\Big)^{2}\\
&=\Bigg(\frac{\frac{1}{\sqrt{q}(q^{2}-1)}+\frac{\sqrt{q}}{q+1}}{\sqrt{q}}\Bigg)^{2}\\
&\leq\Bigg(\frac{1+\frac{1}{\sqrt{q}(q^{2}-1)}+\frac{\sqrt{q}}{q+1}}{\sqrt{q}}\Bigg)^{2}
=\left(\frac{1+\tilde{\varepsilon}_{q}}{\sqrt{q}}\right)^{2}=\frac{1+\varepsilon_{q}}{q},
\end{align*}
where the two infinitesimals $\tilde{\varepsilon}_{q}$ and $\varepsilon_{q}$ are given by
\begin{equation*}
\tilde{\varepsilon}_{q}=\frac{q^{2}-q+1}{\sqrt{q}(q^{2}-1)},
\end{equation*}
\begin{equation*}
\varepsilon_{q}=\tilde{\varepsilon}_{q}^{2}+2\tilde{\varepsilon}_{q}
=\frac{(q^{2}-q+1)(2q^{2}\sqrt{q}+q^{2}-q-2\sqrt{q}+1)}{q(q^{2}-1)^{2}}.
\end{equation*}
Besides, we have
\begin{equation*}
\left|\frac{1}{q+1}-\left(\frac{q^{2}-q+1}{(q^{2}-1)q}\right)^{2}\right|=O(q^{-1}).
\end{equation*}

{\bfseries Case 2:} For $M_{i}=\frac{1}{q}E^{-\frac{1}{2}}|v_{a,b}\rangle\langle v_{a,b}|E^{-\frac{1}{2}}$ and
$M_{j}=\frac{1}{q}E^{-\frac{1}{2}}|e_{j}\rangle\langle e_{j}|E^{-\frac{1}{2}}$, we obtain that
\begin{align*}
q^{2}\mathrm{Tr}(M_{i}M_{j})
&=|\langle e_{j}|E^{-1}|v_{a,b}\rangle|^{2}\\
&=\Big|\frac{q^{2}}{q^{2}-1}\langle e_{j}|v_{a,b}\rangle-\frac{1}{q^{2}-1}\langle e_{j}|R_{1,1}|v_{a,b}\rangle+\frac{q}{q^{2}-1}\langle e_{j}|\sum_{i=2}^{q}R_{i,q+2-i}|v_{a,b}\rangle\Big|^{2}\\
&\leq \Big(\frac{q^{2}}{q^{2}-1}\cdot \frac{1}{\sqrt{q}}+\frac{1}{q^{2}-1}\cdot \frac{1}{\sqrt{q}}+\frac{q}{q^{2}-1}\cdot \frac{1}{\sqrt{q}}\Big)^{2}\\
&=\left(\frac{q^{2}+q+1}{(q^{2}-1)\sqrt{q}}\right)^{2}=\Bigg(\frac{\frac{q^{2}+q+1}{q^{2}-1}}{\sqrt{q}}\Bigg)^{2}
=\left(\frac{1+\tilde{\varepsilon}_{q}}{\sqrt{q}}\right)^{2}=\frac{1+\varepsilon_{q}}{q},
\end{align*}
where the two infinitesimals $\tilde{\varepsilon}_{q}$ and $\varepsilon_{q}$ are given by
\begin{equation*}
\tilde{\varepsilon}_{q}=\frac{q+2}{q^{2}-1},
\end{equation*}
\begin{equation*}
\varepsilon_{q}=\tilde{\varepsilon}_{q}^{2}+2\tilde{\varepsilon}_{q}=\frac{q(q+2)(2q+1)}{(q^{2}-1)^{2}}.
\end{equation*}
Besides, we have
\begin{equation*}
\left|\frac{1}{q+1}-\left(\frac{q^{2}+q+1}{(q^{2}-1)\sqrt{q}}\right)^{2}\right|=O(q^{-2}).
\end{equation*}

{\bfseries Case 3:} For $M_{i}=\frac{1}{q}E^{-\frac{1}{2}}|e_{i}\rangle\langle e_{i}|E^{-\frac{1}{2}}$ and
$M_{j}=\frac{1}{q}E^{-\frac{1}{2}}|e_{j}\rangle\langle e_{j}|E^{-\frac{1}{2}}$, where $i\neq j$, we have
\begin{align*}
q^{2}\mathrm{Tr}(M_{i}M_{j})
&=|\langle e_{j}|E^{-1}|e_{i}\rangle|^{2}\\
&=\Big|\frac{q^{2}}{q^{2}-1}\langle e_{j}|e_{i}\rangle-\frac{1}{q^{2}-1}\langle e_{j}|R_{1,1}|e_{i}\rangle+\frac{q}{q^{2}-1}\langle e_{j}|\sum_{i=2}^{q}R_{i,q+2-i}|e_{i}\rangle\Big|^{2}\\
&\leq \Big(\frac{q^{2}}{q^{2}-1}\cdot 0+\frac{1}{q^{2}-1}\cdot 0+\frac{q}{q^{2}-1}\Big)^{2}\\
&=\left(\frac{q}{q^{2}-1}\right)^{2}=\Bigg(\frac{\frac{q\sqrt{q}}{q^{2}-1}}{\sqrt{q}}\Bigg)^{2}\leq\Bigg(\frac{1+\frac{q\sqrt{q}}{q^{2}-1}}{\sqrt{q}}\Bigg)^{2}
=\left(\frac{1+\tilde{\varepsilon}_{q}}{\sqrt{q}}\right)^{2}=\frac{1+\varepsilon_{q}}{q},
\end{align*}
where the two infinitesimals $\tilde{\varepsilon}_{q}$ and $\varepsilon_{q}$ are given by
\begin{equation*}
\tilde{\varepsilon}_{q}=\frac{q\sqrt{q}}{q^{2}-1},
\end{equation*}
\begin{equation*}
\varepsilon_{q}=\tilde{\varepsilon}_{q}^{2}+2\tilde{\varepsilon}_{q}
=\frac{q\sqrt{q}(2q^{2}+q\sqrt{q}-2)}{(q^{2}-1)^{2}}.
\end{equation*}
Besides, we have
\begin{equation*}
\left|\frac{1}{q+1}-\left(\frac{q}{q^{2}-1}\right)^{2}\right|=O(q^{-1}).
\end{equation*}

By {\bfseries Cases 1, 2} and {\bfseries 3}, we deduce that $\mathcal{M}$ satisfies the approximate symmetry condition of an $\varepsilon_{q}$-ASIC-POVM.

Finally, let us prove that $\mathcal{M}$ satisfies the informational completeness condition of an $\varepsilon_{q}$-ASIC-POVM.
It suffices to prove that $E_{1},E_{2},\ldots,E_{q^{2}}$ are linearly independent as elements of $M_{q}(\mathbb{C})$.

Suppose there exist $k_{a,b}\in\mathbb{C}$ and $m_{i}\in\mathbb{C}$ such that
\begin{equation}\label{eq2.15}
\sum_{a\in\mathbb{F}_{q}}\sum_{b\in\mathbb{F}_{q}^{\ast}}k_{a,b}E_{a,b}+\sum_{i=1}^{q}m_{i}U_{i}=0,
\end{equation}
where $E_{a,b}=\frac{1}{q}|v_{a,b}\rangle\langle v_{a,b}|$ for $a\in \mathbb{F}_{q}$ and $b\in\mathbb{F}_{q}^{\ast}$,
and $U_{i}=\frac{1}{q}|e_{i}\rangle\langle e_{i}|$ for $i=1,2,\ldots,q$.

Comparing the $(i,i)$-th entry of Eq. (\ref{eq2.15}), where $i=1,2,\ldots,q$, we get that
\begin{equation}\label{eq2.16}
\frac{1}{q^{2}}\sum_{a\in\mathbb{F}_{q}}\sum_{b\in\mathbb{F}_{q}^{\ast}}k_{a,b}+\frac{m_{i}}{q}=0.
\end{equation}

Comparing the $(i,j)$-th entry of Eq. (\ref{eq2.15}), where $1\leq i\neq j\leq q$, we have that
\begin{equation*}
\sum_{a\in \mathbb{F}_{q}}\sum_{b\in\mathbb{F}_{q}^{\ast}}k_{a,b}\chi\Big(a\big(f(a_{i})-f(a_{j})\big)+b(a_{i}-a_{j})\Big)=0.
\end{equation*}
In terms of Lemma \ref{lemma2.9}, we know that $k_{a,b}=0$ for all $a\in \mathbb{F}_{q}$ and $b\in\mathbb{F}_{q}^{\ast}$.
Inserting it into Eq. (\ref{eq2.16}), we have $m_{i}=0$ for all $i=1,2,\ldots,q$.
Hence, $E_{1},E_{2},\ldots,E_{q^{2}}$ are linearly independent as elements of $M_{q}(\mathbb{C})$,
and so are $M_{1},M_{2},\ldots,M_{q^{2}}$. Consequently, $\mathcal{M}$ satisfies the informational completeness condition of an $\varepsilon_{q}$-ASIC-POVM.

Therefore, we complete the proof. $\hfill\square$

\end{proof}

\begin{remark}\label{remark2.11}
Note that if we take $a_{i}=-a_{q+2-i}$ for $i=2,3,\ldots,\frac{q+1}{2}$, and put $a_{1}=0$ with $f(a_{1})=0$ in Theorem \ref{theorem2.10},
then we obtain the same $2$-to-$1$ PN functions $f(x)$ used in \cite[Theorem III.3]{Cao2017Two}.
Therefore, a natural question arises: whether there exist some $2$-to-$1$ PN functions that do not meet the condition in \cite[Theorem III.3]{Cao2017Two}.
To be specific, we wish to know whether there are some $2$-to-$1$ PN functions $f(x)$ that satisfy either $f(0)\neq 0$ or $f(x)\neq f(-x)$ for some $x\in \mathbb{F}_{q}^{\ast}$. We point out that the following polynomials over any finite field $\mathbb{F}_{q}$ of odd characteristic
are $2$-to-$1$ PN functions which meet the requirement.

\underline{1) The quadratic polynomials $sx^{2}+tx+w$ for all $sw\neq 0$.} This is because:
\begin{itemize}
\setlength{\itemsep}{1pt}
\setlength{\parsep}{1pt}
\setlength{\parskip}{1pt}
\item By \cite[p.7890]{Mesnager2019On}, all the quadratic polynomials over any odd characteristic finite field are $2$-to-$1$ mappings;

\item For any quadratic polynomial $f(x):=sx^{2}+tx+w$ with $s\neq 0$, it is not difficult to see that $f(x+a)-f(x)=2sax+sa^{2}+ta$ is bijective over any odd characteristic finite field for every $a\neq 0$, which reveals that all the quadratic polynomials over any odd characteristic finite field are PN functions.
\end{itemize}
\vskip -2mm

\underline{2) The polynomials $f(x)+d$ for all polynomials $f(x)$ from \cite[Table I]{Cao2017Two} and all $d\neq 0$.} This is because:
\begin{itemize}
\setlength{\itemsep}{1pt}
\setlength{\parsep}{1pt}
\setlength{\parskip}{1pt}
\item As $f(x)$ in \cite[Table I]{Cao2017Two} is a $2$-to-$1$ mapping, $f(x)+d$ with $d\neq 0$ is also a $2$-to-$1$ mapping;

\item Let $F(x):=f(x)+d$. As $f(x)$ in \cite[Table I]{Cao2017Two} is a PN function, one can see that $F(x+a)-F(x)=f(x+a)-f(x)$ is bijective for every $a\neq 0$, which reveals that $F(x)$ is also a PN function.
\end{itemize}
\end{remark}

\subsection{The biangular frame from Theorem \ref{theorem2.10}}\label{subsection2.3}

In this subsection, we will show that the set $\mathcal{A}$ from Theorem \ref{theorem2.10} forms a biangular frame \cite{Cahill2018Constructions,Magsino2019Biangular}.

\begin{definition}\label{definition2.13}
\rm{(\!\!\cite[pp.2]{Cahill2018Constructions})}
A set of vectors $\mathcal{F}=\{|f_{i}\rangle\}_{i=1}^{n}\subseteq \mathbb{C}^{d}$ is a \textbf{frame} if $\mathrm{span}\{|f_{i}\rangle\}_{i=1}^{n}=\mathbb{C}^{d}$. $\mathcal{F}$ is \textbf{unit-norm} if each frame vector $|f_{i}\rangle$ has norm $1$, i.e., it is normalized. For any unit-norm frame $\mathcal{F}=\{|f_{i}\rangle\}_{i=1}^{n}$, denote by $S_{\mathcal{F}}:=\{|\langle f_{i}|f_{i'}\rangle|:1\leq i\neq i'\leq n\}$ the set of \textbf{frame angles}.
$\mathcal{F}$ is called \textbf{$r$-angular} if $|S_{\mathcal{F}}|=r$ for some positive integer $r$.
In particular, $\mathcal{F}$ is called \textbf{equiangular} if $\mathcal{F}$ is $1$-angular; $\mathcal{F}$ is called \textbf{biangular} if $\mathcal{F}$ is $2$-angular.
\end{definition}

\begin{lemma}\label{lemma2.14}
Let $\{F_{i}\}_{i=1}^{n}$ be a collection of subnormalized projectors on $\mathbb{C}^{d}$ with $F_{i}=p_{i}|f_{i}\rangle\langle f_{i}|$,
where $p_{i}>0$ and $|f_{i}\rangle\in \mathbb{C}^{d}$ is a normalized vector for $i=1,\ldots,n$.
Denote $\mathcal{F}=\{|f_{i}\rangle\}_{i=1}^{n}$. If $\mathrm{span}\{F_{i}\}_{i=1}^{n}=M_{d}(\mathbb{C})$,
then $\mathcal{F}$ is a frame in $\mathbb{C}^{d}$.
\end{lemma}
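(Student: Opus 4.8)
The plan is to prove the contrapositive-style implication directly: assume $\mathrm{span}\{F_i\}_{i=1}^n = M_d(\mathbb{C})$ and deduce that $\mathcal{F} = \{|f_i\rangle\}_{i=1}^n$ spans $\mathbb{C}^d$, which by Definition \ref{definition2.13} is exactly the assertion that $\mathcal{F}$ is a frame. The natural strategy is to argue by contradiction: suppose $\mathcal{F}$ fails to span $\mathbb{C}^d$, so that $V := \mathrm{span}\{|f_i\rangle\}_{i=1}^n$ is a proper subspace of $\mathbb{C}^d$. Then there exists a nonzero vector $|w\rangle \in \mathbb{C}^d$ orthogonal to every $|f_i\rangle$, i.e. $\langle w | f_i \rangle = 0$ for all $i$. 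The goal is to show that this forces the span of the $F_i$ to miss some element of $M_d(\mathbb{C})$, contradicting the hypothesis.

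The key observation is that each $F_i = p_i |f_i\rangle\langle f_i|$ satisfies $F_i |w\rangle = p_i |f_i\rangle \langle f_i | w\rangle = 0$, since $\langle f_i | w\rangle = \overline{\langle w | f_i\rangle} = 0$. Consequently every element $M$ in $\mathrm{span}\{F_i\}_{i=1}^n$ is a linear combination of the $F_i$ and therefore also annihilates $|w\rangle$, giving $M|w\rangle = 0$. But if $\mathrm{span}\{F_i\}_{i=1}^n = M_d(\mathbb{C})$, then every $d\times d$ matrix would annihilate $|w\rangle$. This is plainly false: taking $M = R_{1,1}$-type matrices (or more simply the matrix $|w\rangle\langle w|$, which lies in $M_d(\mathbb{C})$ but sends $|w\rangle$ to $\langle w | w\rangle\, |w\rangle = \||w\rangle\|^2 |w\rangle \neq 0$) produces a matrix not annihilating $|w\rangle$. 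This contradiction shows $V = \mathbb{C}^d$.

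I do not expect any serious obstacle here; the argument is short and essentially a dimension-count / orthogonality observation. The one point that deserves care is making the quantifiers precise: the span condition $\mathrm{span}\{F_i\} = M_d(\mathbb{C})$ must be used to conclude that \emph{every} matrix annihilates $|w\rangle$, and then a single well-chosen matrix (e.g. the identity $I_d$, or $|w\rangle\langle w|$) contradicts this. Using $M = I_d$ is perhaps cleanest: if $|w\rangle$ is annihilated by every element of the span, then in particular $I_d |w\rangle = |w\rangle = 0$, contradicting $|w\rangle \neq 0$. That completes the proof that $\mathcal{F}$ is a frame.
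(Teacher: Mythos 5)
Your proof is correct, but it takes a genuinely different route from the paper's. The paper argues directly: given any nonzero $|w\rangle\in\mathbb{C}^{d}$, it uses the hypothesis $\mathrm{span}\{F_{i}\}_{i=1}^{n}=M_{d}(\mathbb{C})$ to write the rank-one matrix $|w\rangle\langle w|$ as a linear combination $\sum_{i}g_{i}p_{i}|f_{i}\rangle\langle f_{i}|$, and then applies both sides to $|w\rangle$ to exhibit $|w\rangle=\frac{1}{\langle w|w\rangle}\sum_{i}\bigl(g_{i}p_{i}\langle f_{i}|w\rangle\bigr)|f_{i}\rangle$ explicitly as an element of $\mathrm{span}\{|f_{i}\rangle\}_{i=1}^{n}$. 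You instead argue by contradiction through the orthogonal complement: a nonzero $|w\rangle$ orthogonal to every $|f_{i}\rangle$ is annihilated by every $F_{i}$, hence by every element of their span, and in particular by $I_{d}$, which is absurd. Both arguments are short and sound; the notable difference is that yours uses strictly less than the full hypothesis. You only need $I_{d}\in\mathrm{span}\{F_{i}\}_{i=1}^{n}$ (or indeed any single matrix in the span that does not annihilate $|w\rangle$), so your argument proves the stronger statement that the completeness/POVM condition $\sum_{i}F_{i}=I_{d}$ alone already forces $\mathcal{F}$ to be a frame, without invoking informational completeness. The paper's direct expansion, by contrast, gives the explicit coefficients of $|w\rangle$ in terms of the $|f_{i}\rangle$, which is mildly more constructive but requires the full span hypothesis and the (harmless) division by $\langle w|w\rangle$.
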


\begin{proof}
By Definition \ref{definition2.13}, it suffices to prove that $\mathrm{span}\{|f_{i}\rangle\}_{i=1}^{n}=\mathbb{C}^{d}$.
For any $|w\rangle\in\mathbb{C}^{d}\backslash\{\mathbf{0}\}$, it follows from $\mathrm{span}\{F_{i}\}_{i=1}^{n}=M_{d}(\mathbb{C})$
that there exist $\{g_{i}\}_{i=1}^{n}\subseteq \mathbb{C}$ such that
\begin{equation*}
|w\rangle\langle w|=\sum_{i=1}^{n}g_{i}p_{i}|f_{i}\rangle\langle f_{i}|.
\end{equation*}
Then,
\begin{equation*}
|w\rangle=\frac{1}{\langle w|w\rangle}\sum_{i=1}^{n}\big(g_{i}p_{i}\langle f_{i}|w\rangle\big)|f_{i}\rangle\in\mathrm{span}\{|f_{i}\rangle\}_{i=1}^{n}.
\end{equation*}
Hence $\mathbb{C}^{d}\backslash\{\mathbf{0}\}\subseteq \mathrm{span}\{|f_{i}\rangle\}_{i=1}^{n}$.
This together with $\mathbf{0}\in \mathrm{span}\{|f_{i}\rangle\}_{i=1}^{n}$ produces that $\mathbb{C}^{d}\subseteq \mathrm{span}\{|f_{i}\rangle\}_{i=1}^{n}$,
and thus $\mathbb{C}^{d}=\mathrm{span}\{|f_{i}\rangle\}_{i=1}^{n}$. This completes the proof. $\hfill\square$
\end{proof}

\vspace{4pt}

By Lemma \ref{lemma2.14}, we show that the set $\mathcal{A}$ in Eq. (\ref{eq2.12}) forms a biangular frame in the following theorem.

\begin{theorem}\label{theorem2.15}
With the notations in Theorem \ref{theorem2.10}. Then the set $\mathcal{A}$ in Eq. (\ref{eq2.12}) forms a biangular frame in $\mathbb{C}^{q}$.
\end{theorem}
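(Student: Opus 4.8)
The plan is to prove that $\mathcal{A}$ is a frame and that it is exactly $2$-angular by directly computing all pairwise inner products among its vectors. Since Theorem \ref{theorem2.10} already establishes that $\mathcal{M}$ is an $\varepsilon_{q}$-ASIC-POVM, condition (iii) of Definition \ref{definition1.4} together with Proposition \ref{proposition1.2} tells us that the operators $E_{1},\ldots,E_{q^{2}}$ (equivalently $\{\frac{1}{q}|v_{i}\rangle\langle v_{i}|\}$) span $M_{q}(\mathbb{C})$. Thus the frame property of $\mathcal{A}=\{|v_{i}\rangle\}$ follows immediately from Lemma \ref{lemma2.14} with $d=q$, $n=q^{2}$, and $p_{i}=\frac{1}{q}$. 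This handles the ``frame'' half of the statement with essentially no extra work.

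For the $2$-angular property, I would enumerate the three types of off-diagonal inner products $|\langle u|u'\rangle|$ among the members of $\mathcal{A}$, mirroring the case analysis already carried out in the proof of Theorem \ref{theorem2.10}, but now computing the raw (pre-conditioned) inner products rather than the conditioned quantities $\langle v_{a,b}|E^{-1}|v_{c,d}\rangle$. Concretely: (1) between two vectors $|v_{a,b}\rangle$ and $|v_{c,d}\rangle$ with $a\neq c$, Lemma \ref{lemma2.8} gives $|\langle v_{a,b}|v_{c,d}\rangle|=\frac{1}{\sqrt{q}}$; (1') with $a=c$ and $b\neq d$, the sum $\frac{1}{q}\sum_{i}\chi((d-b)a_{i})$ vanishes by Eq. (\ref{eq2.2}), so $|\langle v_{a,b}|v_{c,d}\rangle|=0$; (2) between a vector $|v_{a,b}\rangle$ and a standard basis vector $|e_{j}\rangle$, the inner product is $\frac{1}{\sqrt{q}}|\chi(af(a_{j})+ba_{j})|=\frac{1}{\sqrt{q}}$; and (3) between two distinct standard basis vectors $|e_{i}\rangle$ and $|e_{j}\rangle$, the inner product is $0$. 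Collecting these, the set of frame angles is $S_{\mathcal{A}}=\{0,\tfrac{1}{\sqrt{q}}\}$.

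Since $S_{\mathcal{A}}$ contains exactly the two distinct values $0$ and $\frac{1}{\sqrt{q}}$, we have $|S_{\mathcal{A}}|=2$, so $\mathcal{A}$ is $2$-angular, i.e.\ biangular, in the sense of Definition \ref{definition2.13}. Combined with the frame property from Lemma \ref{lemma2.14}, this completes the proof. I do not anticipate a genuine obstacle here: every ingredient is already in place. The only point requiring a little care is to confirm that \emph{both} angle values actually occur (so that the frame is genuinely biangular and not accidentally equiangular), which is clear since for $q\geq 3$ there exist both a vanishing pair, e.g.\ $|e_{1}\rangle$ and $|e_{2}\rangle$, and a nonvanishing pair, e.g.\ $|v_{a,b}\rangle$ and $|e_{1}\rangle$. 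One should also note that the value $0$ is admitted as a legitimate frame angle under Definition \ref{definition2.13}, so that the angle set has cardinality $2$ rather than being reduced to the single nonzero value.
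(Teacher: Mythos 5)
Your proposal is correct and follows essentially the same route as the paper: the frame property is obtained from the informational completeness established in Theorem \ref{theorem2.10} together with Lemma \ref{lemma2.14}, and the biangular property is obtained from the same four-case computation of the raw inner products (Lemma \ref{lemma2.8} for $a\neq c$, vanishing character sums for $a=c,\ b\neq d$, the value $\tfrac{1}{\sqrt{q}}$ for mixed pairs, and $0$ for distinct standard basis vectors), yielding $S_{\mathcal{A}}=\{0,\tfrac{1}{\sqrt{q}}\}$. Your extra remark that both angle values genuinely occur is a small refinement the paper leaves implicit, but it does not change the argument.
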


\begin{proof}
By Theorem \ref{theorem2.10}, $\mathrm{span}\{E_{i}\}_{i=1}^{n}=M_{d}(\mathbb{C})$. Combining it with Lemma \ref{lemma2.14},
we know that $\mathcal{A}$ is a frame in $\mathbb{C}^{q}$.
Further, let us prove that it is biangular by the following cases.

{\bfseries Case 1:} When $a\neq c$, it follows from Lemma \ref{lemma2.8} that
$\big|\langle v_{a,b}|v_{c,d}\rangle\big|=|\frac{1}{q}\sum_{i=1}^{q}\chi\big((c-a)f(a_{i})+(d-b)a_{i}\big)|=\frac{1}{\sqrt{q}}$.

{\bfseries Case 2:} When $a=c$ and $b\neq d$, $|\langle v_{a,b}|v_{c,d}\rangle|=|\frac{1}{q}\sum_{i=1}^{q}\chi\big((d-b)a_{i}\big)|=0$.

{\bfseries Case 3:} Note that $|\langle e_{i}|v_{a,b}\rangle|=|\frac{1}{\sqrt{q}}\chi\big(af(a_{i})+ba_{i}\big)|=\frac{1}{\sqrt{q}}$.

{\bfseries Case 4:} When $i\neq j$, $|\langle e_{i}|e_{j}\rangle|=0$.

Combing these four cases, we get that the cardinality of the frame angles of $\mathcal{A}$ is $|S_{\mathcal{A}}|=2$.
Therefore, $\mathcal{A}$ is biangular in $\mathbb{C}^{q}$, which completes the proof. $\hfill\square$
\end{proof}

\subsection{Comparison of our results with known ones}\label{subsection2.5}

By using $2$-to-$1$ PN functions, Theorem \ref{theorem2.10} and \cite[Theorem III.3]{Cao2017Two} construct $\varepsilon_{q}$-ASIC-POVMs.
Although their parameters (the dimension and cardinality) are the same, we summarize other differences of these $\varepsilon_{q}$-ASIC-POVMs and their corresponding biangular frames in the following.
\vspace{-3pt}
\begin{itemize}
\item [(1)] In \cite[Theorem III.3]{Cao2017Two}, the authors constructed the $\varepsilon_{q}$-ASIC-POVMs by using a special kind of $2$-to-$1$ PN functions $f(x)$ satisfying $f(0)=0$ and $f(x)=f(y)$ iff $x=-y$ for all $x,y\in\mathbb{F}_{q}$.
    In Theorem \ref{theorem2.10}, we construct a more general class of $\varepsilon_{q}$-ASIC-POVMs from arbitrary $2$-to-$1$ PN functions.
\vspace{-3pt}

\item [(2)] To guarantee that the construction satisfies the informational completeness condition,
\cite[Theorem III.3]{Cao2017Two} utilizes certain properties of mutually unbiased bases (MUBs),
while Theorem \ref{theorem2.10} utilizes Lemmas \ref{lemma2.1} and \ref{lemma2.9}.

\vspace{-3pt}

\item [(3)] \underline{Practical application}. As shown in Theorem \ref{theorem2.15}, the set of vectors from Theorem \ref{theorem2.10}, i.e.,
$\mathcal{A}=\{|v_{a,b}\rangle:a\in\mathbb{F}_{q},\ b\in\mathbb{F}_{q}^{\ast}\}\cup \{|e_{1}\rangle\cup\ldots\cup|e_{q}\rangle\}$ with $|v_{a,b}\rangle=\frac{1}{\sqrt{q}}\big(\chi(af(a_{i})+ba_{i})\big)^{T}_{i=1,\ldots,q}$, forms a biangular frame.
Comparing to the biangular frames derived from \cite[Theorem III.3]{Cao2017Two}, we can obtain more kinds of biangular frames from Theorem \ref{theorem2.10} since the $2$-to-$1$ PN functions $f(x)$ used in Theorem \ref{theorem2.10} are general.
One approach to Zauner's conjecture, and more generally to constructions of equiangular tight frames (ETFs), is to find continuous families of biangular frames that contain SIC-POVMs or ETFs (see, e.g., \cite{Cahill2018Constructions,Magsino2019Biangular}).
From this perspective, it makes sense to obtain more kinds of biangular frames.

\vspace{-3pt}

\item [(4)] \underline{Potential application}. For other possible applications of $\varepsilon_{q}$-ASIC-POVMs in the future
(e.g., they might be useful in quantum cryptography \cite{Boileau2005Unconditional},
dynamic quantum tomography \cite{Czerwinski2021Quantum}, etc),
we characterize ``how close'' the $\varepsilon_{q}$-ASIC-POVMs are from being SIC-POVMs of dimension $q$. Specifically,
i) we give the explicit expressions about all $\varepsilon_{q}$ in different kinds of upper bounds of $q^{2}\mathrm{Tr}(M_{i}M_{j})$ in Theorem \ref{theorem2.10};
ii) we calculate the absolute value of the difference
$|\frac{1}{q+1}-A|$ in Theorem \ref{theorem2.10}, where
$\frac{1}{q+1}$ is the value of $d^{2}\mathrm{Tr}(E_{i}E_{j})$ in Definition \ref{definition1.3} (ii) for $d=q$, and
$A$ is the explicit upper bound of $d^{2}\mathrm{Tr}(A_{i}A_{j})$ in Definition \ref{definition1.4} (ii) for $d=q$.
\end{itemize}

\section{On $\varepsilon_{q+1}$-ASIC-POVMs and corresponding set of vectors via the Li bound}\label{section3}

Li's character sum bound (the Li bound, for short), obtained by Wen-Ch'ing Li \cite{Li1992Character}, is an estimate on a class of multiplicative character sums. In this section, we will present the construction of $\varepsilon_{q+1}$-ASIC-POVMs by using the Li bound.
We will also show that the set of vectors corresponding to the $\varepsilon_{q+1}$-ASIC-POVM forms a $\left((q+1)^{2},q+1\right)$ asymptotically optimal codebook.

\subsection{Notation}

Let $q$ be a prime power. We also define
\begin{itemize}
\item $\mathbb{F}_{q^{3}}^{\ast}=\langle\alpha\rangle$.

\item $\mathbb{F}_{q}=\{b_{1},b_{2},\ldots,b_{q}\}$.

\item $N=\{x\in\mathbb{F}_{q^{3}}^{\ast}|\mathrm{Nr}: \mathbb{F}_{q^{3}}^{\ast}\rightarrow \mathbb{F}_{q}^{\ast},\ \mathrm{Nr}(x)=1\}$,
where $\mathrm{Nr}$ denotes the norm mapping from $\mathbb{F}_{q^{3}}^{\ast}$ to $\mathbb{F}_{q}^{\ast}$.

\item $S=\{(\alpha-b)^{q-1}|b\in\mathbb{F}_{q}\}\cup\{1\}$.
\end{itemize}

\subsection{Construction of $\varepsilon_{q+1}$-ASIC-POVMs}

Let us give the following two lemmas.

\begin{lemma}\label{lemma3.1}
$N=\{\alpha^{m(q-1)}|\ m=1,2,\ldots,q^{2}+q+1\}$.
\end{lemma}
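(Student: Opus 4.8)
The plan is to identify $N$ as the kernel of the norm map and then use the cyclic structure of $\mathbb{F}_{q^{3}}^{\ast}$ to pin this kernel down explicitly. First I would recall that the norm map from $\mathbb{F}_{q^{3}}$ to $\mathbb{F}_{q}$ is given by $\mathrm{Nr}(x)=x\cdot x^{q}\cdot x^{q^{2}}=x^{1+q+q^{2}}$, since these are the three conjugates of $x$ over $\mathbb{F}_{q}$. Because $1+q+q^{2}=(q^{3}-1)/(q-1)$, the defining condition $\mathrm{Nr}(x)=1$ is equivalent to $x^{q^{2}+q+1}=1$ (note $\mathrm{Nr}(x)\in\mathbb{F}_{q}^{\ast}$, so there is no loss in replacing the norm value $1$ by this power equation). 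Hence $N$ is precisely the set of $(q^{2}+q+1)$-th roots of unity inside $\mathbb{F}_{q^{3}}^{\ast}$.

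The second step is to exploit cyclicity. Since $\mathbb{F}_{q^{3}}^{\ast}=\langle\alpha\rangle$ is cyclic of order $q^{3}-1$ and $q^{2}+q+1$ divides $q^{3}-1$, the elements satisfying $x^{q^{2}+q+1}=1$ form the unique subgroup of order $q^{2}+q+1$, which is generated by $\alpha^{(q^{3}-1)/(q^{2}+q+1)}=\alpha^{q-1}$. Therefore $N=\langle\alpha^{q-1}\rangle=\{\alpha^{m(q-1)}\mid m=1,2,\ldots,q^{2}+q+1\}$, where the indicated range of $m$ enumerates exactly the $q^{2}+q+1$ distinct powers, with $m=q^{2}+q+1$ giving the identity $\alpha^{q^{3}-1}=1$.

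There is essentially no hard part here; the argument is a direct computation together with one standard structural fact. The only points that require care are the two index identities $1+q+q^{2}=(q^{3}-1)/(q-1)$ and $(q^{3}-1)/(q^{2}+q+1)=q-1$, and the observation (routine for cyclic groups) that the solution set of $x^{d}=1$ with $d\mid|\mathbb{F}_{q^{3}}^{\ast}|$ is exactly the cyclic subgroup of order $d$. The latter guarantees both that $N$ has the claimed cardinality $q^{2}+q+1$ and that the listed powers $\alpha^{m(q-1)}$ are pairwise distinct, so that the displayed set is neither too small nor redundant.
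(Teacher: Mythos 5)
Your proof is correct and follows essentially the same route as the paper: rewrite the norm condition as $x^{(q^{3}-1)/(q-1)}=1$ and identify $N$ with the set of $(q^{2}+q+1)$-th roots of unity in $\mathbb{F}_{q^{3}}^{\ast}$. In fact you spell out the final step---that these roots form the cyclic subgroup $\langle\alpha^{q-1}\rangle$ of order $q^{2}+q+1$---which the paper leaves implicit, so your write-up is slightly more complete than the original.
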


\begin{proof}
By the definition of $N$, we know that $\mathrm{Nr}(a)=1\Longleftrightarrow a^{\frac{q^{3}-1}{q-1}}=1$.
Hence, the elements in $N$ are exactly the roots of $x^{\frac{q^{3}-1}{q-1}}=1$, which completes the proof. $\hfill\square$
\end{proof}

\begin{lemma}\label{lemma3.2}
Let $S=\{d_{1},\ldots,d_{q},d_{q+1}\}$, where $d_{i}=(\alpha-b_{i})^{q-1}$ for $i=1,2,\ldots,q$, and $d_{q+1}=1$.
Define
\begin{equation*}
T=\{d_{i}d_{j}^{-1}|\ 1\leq i\neq j\leq q+1\}.
\end{equation*}
Then, $T=\{\alpha^{m(q-1)}|\ m=1,2,\ldots,q^{2}+q\}$.
\end{lemma}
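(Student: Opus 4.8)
The statement to prove is that $T = \{d_i d_j^{-1} : 1 \le i \neq j \le q+1\}$ equals the full set $\{\alpha^{m(q-1)} : m = 1, 2, \ldots, q^2+q\}$. The plan is to show that $T$ consists exactly of the nonidentity elements of the norm-one subgroup $N$ described in Lemma \ref{lemma3.1}, namely $N \setminus \{1\} = \{\alpha^{m(q-1)} : m = 1, \ldots, q^2+q\}$ (since $\alpha^{(q^2+q+1)(q-1)} = \alpha^{q^3-1} = 1$ is the identity).

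First I would check the two easy containments on the level of set membership. Each $d_i = (\alpha - b_i)^{q-1}$ lies in $N$: indeed $\mathrm{Nr}(d_i) = d_i^{(q^2+q+1)} = (\alpha-b_i)^{(q-1)(q^2+q+1)} = (\alpha - b_i)^{q^3 - 1} = 1$, and similarly $d_{q+1} = 1 \in N$. Since $N$ is a (cyclic) subgroup of $\mathbb{F}_{q^3}^{\ast}$, every quotient $d_i d_j^{-1}$ again lies in $N$; and for $i \neq j$ I must argue $d_i d_j^{-1} \neq 1$, i.e. the $d_i$ are pairwise distinct. This gives $T \subseteq N \setminus \{1\}$, which has cardinality $q^2 + q$.

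For the reverse containment I would use a counting/injectivity argument rather than exhibiting each element. The key point is that the map $b \mapsto (\alpha - b)^{q-1}$ sending $\mathbb{F}_q$ into $N$ is injective: if $(\alpha - b)^{q-1} = (\alpha - b')^{q-1}$ then $(\alpha-b)/(\alpha-b') \in \mathbb{F}_q^{\ast}$ (the kernel of the $(q-1)$-power map on $\mathbb{F}_{q^3}^{\ast}$ is $\mathbb{F}_q^{\ast}$), forcing $\alpha - b = \lambda(\alpha - b')$ for some $\lambda \in \mathbb{F}_q^{\ast}$; since $\alpha \notin \mathbb{F}_q$ and $1, \alpha$ are $\mathbb{F}_q$-linearly independent, comparing coefficients gives $\lambda = 1$ and $b = b'$. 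I must also confirm $d_i \neq d_{q+1} = 1$ for $i \le q$, i.e. $(\alpha - b_i)^{q-1} \neq 1$, which holds because $\alpha - b_i \notin \mathbb{F}_q^{\ast}$. Hence $d_1, \ldots, d_{q+1}$ are $q+1$ distinct elements of the cyclic group $N$ of order $q^2+q+1$.

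The main obstacle, and the crux of the proof, is upgrading ``the $d_i$ are distinct'' to ``their pairwise quotients fill all of $N \setminus \{1\}$.'' I would handle this by showing the $d_i$ form a \emph{perfect difference set} (a planar/Singer difference set) in the cyclic group $N$: a set of $q+1$ elements in a cyclic group of order $q^2+q+1 = (q+1)q + 1$ such that every nonidentity element arises exactly once as a quotient $d_i d_j^{-1}$. The natural way to establish this is to identify $N \cong \mathbb{F}_{q^3}^{\ast}/\mathbb{F}_q^{\ast}$ with the points of the projective plane $PG(2,q)$ and recognize $\{\alpha - b : b \in \mathbb{F}_q\} \cup \{1\}$ as a projective line, whose image is a classical Singer difference set; then the defining property of a planar difference set gives that each of the $q^2+q$ nonidentity elements of $N$ occurs exactly once among the $(q+1)q$ ordered quotients. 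Combined with the upper bound $|T| \le q^2 + q$ from the first containment, this forces $T = N \setminus \{1\} = \{\alpha^{m(q-1)} : m = 1, \ldots, q^2+q\}$, completing the proof. If invoking the Singer/planar-difference-set structure is deemed too heavy, the alternative is a direct count: show the $(q+1)q$ ordered quotients are pairwise distinct by the same coefficient-comparison trick applied to $d_i d_j^{-1} = d_k d_\ell^{-1}$, which again reduces to $\mathbb{F}_q$-linear independence of $1$ and $\alpha$ after clearing the $(q-1)$-powers through $\mathbb{F}_q^{\ast}$.
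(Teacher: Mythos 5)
Your proposal is correct, but your primary route is genuinely different from the paper's. The paper gives a self-contained, elementary case analysis: it shows directly that the $(q+1)q$ ordered quotients $d_{i}d_{j}^{-1}$ are pairwise distinct by clearing the $(q-1)$-th powers to get a ratio lying in $\mathbb{F}_{q}^{\ast}$, and then observing that any nontrivial relation would make $\alpha$ a root of a polynomial of degree at most $2$ over $\mathbb{F}_{q}$, contradicting the fact that the minimal polynomial of $\alpha$ has degree $3$; combined with $T\subseteq\{\alpha^{m(q-1)}\,|\,m=1,\ldots,q^{2}+q\}$ and the cardinality count $\sharp T=q^{2}+q$, this forces equality. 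Your main argument instead identifies $N\cong\mathbb{F}_{q^{3}}^{\ast}/\mathbb{F}_{q}^{\ast}$ via the $(q-1)$-power map, recognizes $\{d_{1},\ldots,d_{q+1}\}$ as the image of the projective line spanned by $1$ and $\alpha$ in $PG(2,q)$, and invokes Singer's theorem that such a line is a planar $(q^{2}+q+1,q+1,1)$-difference set. That is valid and conceptually illuminating (the paper itself remarks, in its comparison subsection, that $S$ is a difference set), but it outsources the core content to a classical theorem, whereas the paper's argument is elementary and essentially amounts to a direct proof of the Singer property for this particular line; so the two approaches trade self-containedness for conceptual context. One caveat on your fallback ``direct count'': you claim the quotient-equality cases reduce to $\mathbb{F}_{q}$-linear independence of $1$ and $\alpha$. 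That suffices for injectivity of $b\mapsto(\alpha-b)^{q-1}$, but equality of two quotients $d_{i}d_{j}^{-1}=d_{k}d_{\ell}^{-1}$ involves products of two linear factors and hence a quadratic relation in $\alpha$ with possibly nonzero leading coefficient; you need linear independence of $1,\alpha,\alpha^{2}$, i.e.\ $[\mathbb{F}_{q}(\alpha):\mathbb{F}_{q}]=3$, which is precisely the fact the paper's proof uses.
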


\begin{proof}
Let us prove that for any $d_{i_{1}}d_{j_{1}}^{-1},d_{i_{2}}d_{j_{2}}^{-1}\in T$, we have
$d_{i_{1}}d_{j_{1}}^{-1}\neq d_{i_{2}}d_{j_{2}}^{-1}$ if $(i_{1},j_{1})\neq(i_{2},j_{2})$. Otherwise, we have the following cases.

{\bfseries Case 1:} Suppose $d_{i_{1}}d_{j_{1}}^{-1}=d_{i_{2}}d_{j_{2}}^{-1}\in T$,
where $1\leq i_{1},i_{2},j_{1},j_{2}\leq q$ and  $(i_{1},j_{1})\neq(i_{2},j_{2})$. Then,
\begin{equation*}
\bigg[\frac{(\alpha-b_{i_{1}})(\alpha-b_{j_{2}})}{(\alpha-b_{j_{1}})(\alpha-b_{i_{2}})}\bigg]^{q-1}=1,
\end{equation*}
and thus $\frac{(\alpha-b_{i_{1}})(\alpha-b_{j_{2}})}{(\alpha-b_{j_{1}})(\alpha-b_{i_{2}})}\in\mathbb{F}_{q}^{\ast}$.
We denote this ratio by $c$. Then, $(\alpha-b_{i_{1}})(\alpha-b_{j_{2}})-c(\alpha-b_{j_{1}})(\alpha-b_{i_{2}})=0$.
In this case, if $c\neq 1$, we deduce that $\alpha$ is a root of a quadratic polynomial over $\mathbb{F}_{q}$.
However, the degree of the minimal polynomial of $\alpha$ over $\mathbb{F}_{q}$ is $3$, which leads to a contradiction.
Hence, $c=1$, and we obtain that
\begin{equation*}
[(b_{i_{1}}+b_{j_{2}})-(b_{i_{2}}+b_{j_{1}})]\alpha+b_{i_{2}}b_{j_{1}}-b_{i_{1}}b_{j_{2}}=0,
\end{equation*}
which implies that
\begin{equation*}
b_{i_{1}}+b_{j_{2}}=b_{i_{2}}+b_{j_{1}},\ b_{i_{1}}b_{j_{2}}=b_{i_{2}}b_{j_{1}}.
\end{equation*}

Then, we have
\begin{equation*}
b_{i_{1}}=b_{j_{1}},\ b_{i_{2}}=b_{j_{2}}
\end{equation*}
or
\begin{equation*}
b_{i_{1}}=b_{i_{2}},\ b_{j_{1}}=b_{j_{2}}.
\end{equation*}
The first case illustrates that $d_{i_{1}}=d_{j_{1}}$, and thus $d_{i_{1}}d_{j_{1}}^{-1}=1\in T$, which leads to a contradiction.
The second case means that $(i_{1},j_{1})=(i_{2},j_{2})$, which is also a contradiction.

{\bfseries Case 2:} Suppose $d_{i_{1}}d_{j_{1}}^{-1}=d_{i_{2}}d_{q+1}^{-1}\in T$, where $1\leq i_{1},j_{1},i_{2}\leq q$. Then,
\begin{equation*}
\bigg[\frac{\alpha-b_{i_{1}}}{(\alpha-b_{j_{1}})(\alpha-b_{i_{2}})}\bigg]^{q-1}=1.
\end{equation*}
Similar to Case 1, we deduce that $\alpha$ is a root of a quadratic polynomial over $\mathbb{F}_{q}$, which is a contradiction.

{\bfseries Case 3:} Suppose $d_{i_{1}}d_{j_{1}}^{-1}=d_{q+1}d_{j_{2}}^{-1}\in T$, where $1\leq i_{1},j_{1},j_{2}\leq q$.
Then, similar to Case 1, we obtain a contradiction.

{\bfseries Case 4:} Suppose $d_{i_{1}}d_{q+1}^{-1}=d_{i_{2}}d_{q+1}^{-1}\in T$, where $1\leq i_{1}\neq i_{2}\leq q$.
Then, $d_{i_{1}}=d_{i_{2}}$ and thus $i_{1}=i_{2}$, which is a contradiction.

{\bfseries Case 5:} Suppose $d_{q+1}d_{j_{1}}^{-1}=d_{q+1}d_{j_{2}}^{-1}\in T$, where $1\leq j_{1}\neq j_{2}\leq q$.
Then, $d_{j_{1}}=d_{j_{2}}$ and thus $j_{1}=j_{2}$, which is a contradiction.

By the above analysis, we obtain $\sharp T=q(q+1)=q^{2}+q$. Moreover, it is verified that
$T\subseteq\{\alpha^{m(q-1)}|\ m=1,2,\ldots,q^{2}+q\}$.
Therefore, $T=\{\alpha^{m(q-1)}|\ m=1,2,\ldots,q^{2}+q\}$, which completes the proof. $\hfill\square$
\end{proof}

\vspace{6pt}

Based on Lemmas \ref{lemma3.1} and \ref{lemma3.2}, we obtain the following lemma.

\begin{lemma}\label{lemma3.3}
Let $q$ be a prime power. Let $S=\{d_{1},\ldots,d_{q},d_{q+1}\}$, where $d_{i}=(\alpha-b_{i})^{q-1}$ for $i=1,2,\ldots,q$, and $d_{q+1}=1$.
Denote by $\widehat{N}$ and $\widehat{N}\backslash \{1\}$ the multiplicative character group of $N$ and the set consisting of
all nontrivial multiplicative characters of $N$, respectively.
If $k_{\psi}\in\mathbb{C}$ (for $\psi\in\widehat{N}\backslash \{1\}$) such that
\begin{equation}\label{eq3.1}
\sum_{\psi\in\widehat{N}\backslash \{1\}}k_{\psi}\psi(d_{i}d_{j}^{-1})=0
\end{equation}
for all $1\leq i\neq j\leq q+1$,
then $k_{\psi}\equiv0$ for all $\psi\in\widehat{N}\backslash \{1\}$.
\end{lemma}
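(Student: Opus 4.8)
The plan is to reduce the hypothesis to a single vanishing condition on the cyclic group $N$ and then run the multiplicative analogue of the Fourier argument used in Lemma \ref{lemma2.1}. First I would invoke Lemma \ref{lemma3.2}: the $q(q+1)=q^{2}+q$ ordered pairs $(i,j)$ with $1\le i\neq j\le q+1$ are carried by $(i,j)\mapsto d_{i}d_{j}^{-1}$ into the set $T=\{\alpha^{m(q-1)}\mid m=1,\ldots,q^{2}+q\}$, whose cardinality is exactly $q^{2}+q$; hence this assignment is a bijection onto $T$. Combining this with Lemma \ref{lemma3.1}, which gives $N=\{1\}\cup T$ with $|N|=q^{2}+q+1$, I conclude $T=N\setminus\{1\}$. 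Consequently, requiring Eq. (\ref{eq3.1}) to hold for all $1\le i\neq j\le q+1$ is equivalent to
\begin{equation*}
\sum_{\psi\in\widehat{N}\setminus\{1\}}k_{\psi}\,\psi(t)=0\qquad\text{for every }t\in N\setminus\{1\}.
\end{equation*}

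Next I would set up Fourier inversion on the finite cyclic group $N$. Writing $k_{1}:=0$ for the trivial character and $G(t):=\sum_{\psi\in\widehat{N}}k_{\psi}\,\psi(t)$, the previous step says that $G$ vanishes on $N\setminus\{1\}$, so $G$ is supported at the identity. Using the orthogonality relations for multiplicative characters of $N$ (the exact analogues of Eqs. (\ref{eq2.2})--(\ref{eq2.3}), namely $\sum_{t\in N}\psi(t)=0$ for $\psi\neq 1$ and $\sum_{\psi\in\widehat{N}}\psi(t)=0$ for $t\neq 1$), inversion yields, for every $\psi\in\widehat{N}$,
\begin{equation*}
k_{\psi}=\frac{1}{|N|}\sum_{t\in N}G(t)\,\overline{\psi(t)}=\frac{1}{|N|}\,G(1),
\end{equation*}
since only the $t=1$ term survives and $\overline{\psi(1)}=1$. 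Thus every $k_{\psi}$ equals one common constant $m:=G(1)/|N|$, which is precisely the multiplicative transcription of the conclusion of Lemma \ref{lemma2.1}.

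Finally I would exploit that the trivial character has been excluded. The inversion formula applies in particular to $\psi=1$, giving $k_{1}=G(1)/|N|=m$; but by convention $k_{1}=0$, whence $m=0$ and therefore $k_{\psi}\equiv0$ for all $\psi\in\widehat{N}\setminus\{1\}$, as claimed. (Equivalently, substituting $G(1)=\sum_{\psi\in\widehat{N}\setminus\{1\}}k_{\psi}=(|N|-1)m$ into $m=G(1)/|N|$ forces $|N|\,m=(|N|-1)m$, i.e. $m=0$.) The only genuine obstacle is the first step, obtaining the clean identification of $T$ with $N\setminus\{1\}$ as a bijection, which is exactly what Lemma \ref{lemma3.2} supplies; once that is in hand, the remainder is a routine transfer of the additive Fourier argument of Lemma \ref{lemma2.1} to the character group $\widehat{N}$, with the exclusion of the trivial character pinning the common constant to zero.
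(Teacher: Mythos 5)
Your proof is correct, and it finishes by a genuinely different route than the paper. The two arguments share the same reduction: via Lemmas \ref{lemma3.1} and \ref{lemma3.2} you identify $T=\{d_{i}d_{j}^{-1}:i\neq j\}$ with $N\setminus\{1\}$, so the hypothesis becomes the vanishing of $\sum_{\psi\neq 1}k_{\psi}\psi(t)$ at every nontrivial $t\in N$ (as you note, surjectivity of $(i,j)\mapsto d_{i}d_{j}^{-1}$ onto $T$ is all that is needed here, and Lemma \ref{lemma3.2} gives the stronger bijectivity). From that point the paper proceeds concretely: it writes the nontrivial characters as $\varphi_{m}(\alpha^{j(q-1)})=\zeta_{q^{2}+q+1}^{mj}$, observes that the hypothesis says the polynomial $g(x)=\sum_{m=1}^{q^{2}+q}k_{m}x^{m}$ vanishes at the $q^{2}+q$ distinct points $\zeta_{q^{2}+q+1}^{n}$, $n=1,\ldots,q^{2}+q$, and also at $x=0$, and concludes $g\equiv 0$ since $\deg g\leq q^{2}+q$ while $g$ has $q^{2}+q+1$ distinct roots. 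You instead extend the coefficient vector by $k_{1}:=0$, apply Fourier inversion and character orthogonality on the group $N$ to deduce that every $k_{\psi}$ equals the single constant $G(1)/|N|$, and then use the convention $k_{1}=0$ to force that constant to vanish --- a faithful multiplicative transcription of Lemma \ref{lemma2.1} together with the padding trick ($k_{a,0}=0$) from the proof of Lemma \ref{lemma2.9}. Both are valid non-degeneracy arguments of Vandermonde type; yours is more conceptual and would work verbatim for any finite abelian group $N$ (no cyclicity needed), while the paper's is more elementary, resting only on root counting for polynomials, but leans on the explicit cyclic parametrization of $\widehat{N}$ supplied by Lemma \ref{lemma3.1}.
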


\begin{proof}
By Lemma \ref{lemma3.1}, we have that
\begin{equation*}
\widehat{N}\backslash \{1\}=\{\varphi_{m}|\ m=1,2,\ldots,q^{2}+q\}
\end{equation*}
with $\varphi_{m}(\alpha^{j(q-1)})=\zeta_{q^{2}+q+1}^{mj}$ for $j=1,2,\ldots,q^{2}+q+1$,
where $\zeta_{q^{2}+q+1}$ denotes a primitive $(q^{2}+q+1)$-th root of unity in $\mathbb{C}$.
By Lemma \ref{lemma3.2}, we know Eq. (\ref{eq3.1}) is equivalent to
\begin{equation*}
\sum_{m=1}^{q^{2}+q}k_{m}\varphi_{m}(\alpha^{n(q-1)})=0
\end{equation*}
for all $n=1,2,\ldots,q^{2}+q$. That is,
\begin{equation*}
\sum_{m=1}^{q^{2}+q}k_{m}\zeta_{q^{2}+q+1}^{mn}=0
\end{equation*}
for all $n=1,2,\ldots,q^{2}+q$.

Let us show that $k_{m}\equiv0$ for all $m=1,2,\ldots,q^{2}+q$. To this end, consider the following polynomial
\begin{equation*}
g(x)=\sum_{m=1}^{q^{2}+q}k_{m}x^{m}.
\end{equation*}
Then, $R:=\big\{\zeta_{q^{2}+q+1}^{n}\big|\ n=1,2,\ldots,q^{2}+q\big\}$ are $q^{2}+q$ distinct roots of $g(x)$.
Since $g(0)=0$, we deduce that $g(x)$ has at least $q^{2}+q+1$ distinct roots.
By $\mathrm{deg}(g(x))=q^{2}+q$, we have $g(x)=0$, which derives $k_{m}\equiv0$ for all $m=1,2,\ldots,q^{2}+q$.
This completes the proof. $\hfill\square$
\end{proof}

\vspace{6pt}

The following lemma corresponds to the case $n=3$ in \cite[Theorem 6]{Li1992Character}, which is called the Li bound for short,
and will be used later in the construction of $\varepsilon_{q+1}$-ASIC-POVMs.

\begin{lemma}\label{lemma3.4}
\rm{(\!\!\cite[Theorem 6]{Li1992Character}})
\emph{(The Li bound) Let $q$ be a prime power, and let $N$, $\widehat{N}$, and $S$ be as in Lemma \ref{lemma3.3}.
Then, for each $\psi\in\widehat{N}\backslash \{1\}$,
\begin{equation*}
\big|\sum_{s\in S}\psi(s)\big|\leq\sqrt{q}.
\end{equation*}
}
\end{lemma}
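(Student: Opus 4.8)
The plan is to recognize the statement as a special case ($n=3$) of Li's general character-sum bound from \cite{Li1992Character}, so the conceptual work is to see why the relevant sum over $S$ fits into her framework and to invoke the theorem with the correct parameters. Concretely, the set $S=\{(\alpha-b)^{q-1}\mid b\in\mathbb{F}_{q}\}\cup\{1\}$ sits inside the norm-one subgroup $N$ of $\mathbb{F}_{q^{3}}^{\ast}$, since $\mathrm{Nr}((\alpha-b)^{q-1})=\mathrm{Nr}(\alpha-b)^{q-1}=1$ for every $b\in\mathbb{F}_{q}$ (the norm lands in $\mathbb{F}_{q}^{\ast}$, and raising an element of $\mathbb{F}_{q}^{\ast}$ to the $(q-1)$-th power gives $1$). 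Thus a nontrivial multiplicative character $\psi$ of $N$ acts on $S$, and the quantity $\sum_{s\in S}\psi(s)$ is exactly the type of twisted sum Li estimates.

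First I would set up the dictionary between the objects in Lemma \ref{lemma3.3} and Li's theorem. The cyclic group $N$ has order $q^{2}+q+1=\frac{q^{3}-1}{q-1}$ by Lemma \ref{lemma3.1}, and $S$ has $q+1$ elements. The elements $\alpha-b$ for $b\in\mathbb{F}_{q}$ together with the ``point at infinity'' (contributing the $1$) correspond to the $q+1$ places of the projective line / the Frobenius-type configuration that Li's bound is phrased around for the degree-$3$ case. I would state the sum $\sum_{s\in S}\psi(s)$ as the incomplete character sum associated with these $q+1$ linear factors, identify the conductor or the relevant parameter as $3$ (the extension degree $[\mathbb{F}_{q^{3}}:\mathbb{F}_{q}]$), and read off Li's conclusion that the absolute value is bounded by $(n-2)\sqrt{q}=\sqrt{q}$ when $n=3$.

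The bound itself is then immediate from \cite[Theorem 6]{Li1992Character}: for $n=3$ the coefficient $(n-2)$ equals $1$, giving $\bigl|\sum_{s\in S}\psi(s)\bigr|\leq\sqrt{q}$ for every nontrivial $\psi\in\widehat{N}\backslash\{1\}$. Since the paper explicitly presents this as a quoted lemma with attribution, the ``proof'' reduces to checking that the hypotheses of Li's theorem are met—namely that $\psi$ is nontrivial and that $S$ arises from the specified geometric data—rather than reproving the Weil-type estimate underlying it.

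The main obstacle, and the only genuinely delicate point, is the verification that the configuration here is precisely the one to which Li's theorem applies, i.e.\ that the $q+1$ shifted factors $(\alpha-b)^{q-1}$ plus the extra $1$ match the normalization and the indexing in \cite[Theorem 6]{Li1992Character} so that the resulting degree parameter is exactly $3$ and the bound $\sqrt{q}$ (with no extra multiplicative constant) is correct. I would therefore spend the care on translating notation faithfully; everything else is a direct citation, and no independent estimation of the character sum is needed.
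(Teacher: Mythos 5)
Your proposal takes essentially the same approach as the paper: the paper states Lemma \ref{lemma3.4} purely as a citation of \cite[Theorem 6]{Li1992Character} specialized to $n=3$ and offers no independent proof, which is exactly your ``invoke the theorem after checking the hypotheses'' strategy. Your verifications---that $S$ lies in the norm-one subgroup $N$ since $\mathrm{Nr}(x^{q-1})=\mathrm{Nr}(x)^{q-1}=1$, and that Li's bound $(n-2)\sqrt{q}$ specializes to $\sqrt{q}$ when the extension degree is $3$---are correct and consistent with the paper's framing.
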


By Lemmas \ref{lemma3.3} and \ref{lemma3.4}, we are ready to present the following construction of $\varepsilon_{q+1}$-ASIC-POVMs.

\begin{theorem}\label{theorem3.5}
Let $q$ be a prime power, and let $N$, $\widehat{N}$, and $S$ be as in Lemma \ref{lemma3.3}.
Define
\begin{equation*}
\mathcal{C}=\big\{|u_{\psi}\rangle:\psi\in\widehat{N}\backslash \{1\}\big\}\cup \{|e_{1}\rangle\cup\ldots\cup|e_{q+1}\rangle\},
\end{equation*}
where $|u_{\psi}\rangle=\frac{1}{\sqrt{q+1}}(\psi(d_{1}),\psi(d_{2}),\ldots,\psi(d_{q+1}))^{T}$ for each $\psi\in\widehat{N}\backslash\{1\}$, and
$|e_{i}\rangle\in\mathbb{C}^{q+1}$ is the unit vector whose $i$-th component is $1$, and $0$ in other components.
Define $E=\sum_{i=1}^{(q+1)^{2}}E_{i}$, where $E_{i}=\frac{1}{q+1}|u_{i}\rangle\langle u_{i}|$ for $|u_{i}\rangle\in \mathcal{C}$.
Then $\mathcal{F}=\{F_{i}:=E^{-\frac{1}{2}}E_{i}E^{-\frac{1}{2}}|\ i=1,2,\ldots,(q+1)^{2}\}$ forms an $\varepsilon_{q+1}$-ASIC-POVM, where

1) For $F_{i}=\frac{1}{q+1}E^{-\frac{1}{2}}|u_{\tau}\rangle\langle u_{\tau}|E^{-\frac{1}{2}}$ and
$F_{j}=\frac{1}{q+1}E^{-\frac{1}{2}}|u_{\psi}\rangle\langle u_{\psi}|E^{-\frac{1}{2}}$ with $\tau\neq\psi\in\widehat{N}\backslash \{1\}$,
\begin{equation*}
(q+1)^{2}\mathrm{Tr}(F_{i}F_{j})\leq \left(\frac{\sqrt{q}(q+1)(q^{2}+q+\sqrt{q}+1)}{(q^{2}+2q+2)(q^{2}+q+1)}\right)^{2}=\frac{1+\varepsilon_{q+1}}{q+1},
\end{equation*}
where the infinitesimal $\varepsilon_{q+1}$ is given by
\begin{equation*}
\varepsilon_{q+1}=\frac{q(q+1)^{3}(q^{2}+q+\sqrt{q}+1)^{2}-(q^{2}+2q+2)^{2}(q^{2}+q+1)^{2}}{(q^{2}+2q+2)^{2}(q^{2}+q+1)^{2}}.
\end{equation*}
Besides, we have
\begin{equation*}
\left|\frac{1}{q+2}-\left(\frac{\sqrt{q}(q+1)(q^{2}+q+\sqrt{q}+1)}{(q^{2}+2q+2)(q^{2}+q+1)}\right)^{2}\right|=O(q^{-\frac{5}{2}}).
\end{equation*}

\vspace{6pt}

2) For $F_{i}=\frac{1}{q+1}E^{-\frac{1}{2}}|u_{\tau}\rangle\langle u_{\tau}|E^{-\frac{1}{2}}$ and
$F_{j}=\frac{1}{q+1}E^{-\frac{1}{2}}|e_{j}\rangle\langle e_{j}|E^{-\frac{1}{2}}$ with $\tau\in\widehat{N}\backslash \{1\}$ and $j=1,2,\ldots,q+1$,
\begin{equation*}
(q+1)^{2}\mathrm{Tr}(F_{i}F_{j})\leq\left(\frac{(q+1)^{\frac{3}{2}}(q^{2}+q+\sqrt{q}+1)}{(q^{2}+2q+2)(q^{2}+q+1)}\right)^{2}
=\frac{1+\varepsilon_{q+1}}{q+1},
\end{equation*}
where the infinitesimal $\varepsilon_{q+1}$ is given by
\begin{equation*}
\varepsilon_{q+1}=\frac{(q+1)^{4}(q^{2}+q+\sqrt{q}+1)^{2}-(q^{2}+2q+2)^{2}(q^{2}+q+1)^{2}}{(q^{2}+2q+2)^{2}(q^{2}+q+1)^{2}}.
\end{equation*}
Besides, we have
\begin{equation*}
\left|\frac{1}{q+2}-\left(\frac{(q+1)^{\frac{3}{2}}(q^{2}+q+\sqrt{q}+1)}{(q^{2}+2q+2)(q^{2}+q+1)}\right)^{2}\right|=O(q^{-2}).
\end{equation*}

\vspace{6pt}

3) For $F_{i}=\frac{1}{q+1}E^{-\frac{1}{2}}|e_{i}\rangle\langle e_{i}|E^{-\frac{1}{2}}$ and
$F_{j}=\frac{1}{q+1}E^{-\frac{1}{2}}|e_{j}\rangle\langle e_{j}|E^{-\frac{1}{2}}$ with $1\leq i\neq j\leq q+1$,
\begin{equation*}
(q+1)^{2}\mathrm{Tr}(F_{i}F_{j})=\frac{(q+1)^{4}}{(q^{2}+2q+2)^{2}(q^{2}+q+1)^{2}}\leq\frac{1+\varepsilon_{q+1}}{q+1},
\end{equation*}
where the infinitesimal $\varepsilon_{q+1}$ is given by
\begin{equation*}
\varepsilon_{q+1}=\frac{(q+1)^{5}+2(q+1)^{\frac{5}{2}}(q^{2}+2q+2)(q^{2}+q+1)}{(q^{2}+2q+2)^{2}(q^{2}+q+1)^{2}}.
\end{equation*}
Besides, we have
\begin{equation*}
\left|\frac{1}{q+2}-\frac{(q+1)^{4}}{(q^{2}+2q+2)^{2}(q^{2}+q+1)^{2}}\right|=O(q^{-1}).
\end{equation*}
\end{theorem}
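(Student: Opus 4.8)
The plan is to mirror the three-part structure of the proof of Theorem~\ref{theorem2.10}: verify the completeness/POVM condition by computing $E$ and checking that it is positive definite, verify the approximate symmetry by bounding each Gram entry $(q+1)^2\mathrm{Tr}(F_iF_j)=|\langle u_i|E^{-1}|u_j\rangle|^2$ via the Li bound, and verify informational completeness through Lemma~\ref{lemma3.3}. The key simplification here, compared with Theorem~\ref{theorem2.10}, is that $E$ turns out to be a rank-one perturbation of a scalar matrix, so it can be inverted in closed form.

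First I would compute the entries of $E=\frac{1}{q+1}\sum_{\psi\in\widehat{N}\backslash\{1\}}|u_\psi\rangle\langle u_\psi|+\frac{1}{q+1}I_{q+1}$. Writing $|\mathbf{1}\rangle$ for the all-ones vector in $\mathbb{C}^{q+1}$, the $(k,l)$-entry of the first sum is $\frac{1}{(q+1)^2}\sum_{\psi\in\widehat{N}\backslash\{1\}}\psi(d_kd_l^{-1})$. Since the $d_i$ are pairwise distinct (Lemma~\ref{lemma3.2} shows $d_kd_l^{-1}\neq1$ for $k\neq l$), orthogonality of the characters of $N$ gives $\sum_{\psi\in\widehat{N}\backslash\{1\}}\psi(d_kd_l^{-1})=q^2+q$ if $k=l$ and $=-1$ if $k\neq l$. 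Hence every diagonal entry of $E$ equals $1$ and every off-diagonal entry equals $-\frac{1}{(q+1)^2}$, so
\[
E=\Big(1+\tfrac{1}{(q+1)^2}\Big)I_{q+1}-\tfrac{1}{(q+1)^2}\,|\mathbf{1}\rangle\langle\mathbf{1}|.
\]
As $|\mathbf{1}\rangle\langle\mathbf{1}|$ has eigenvalues $q+1$ and $0$, the eigenvalues of $E$ are $\frac{q^2+q+1}{(q+1)^2}$ and $\frac{q^2+2q+2}{(q+1)^2}$, both positive; thus $E$ is positive definite, $E^{-\frac12}$ is well defined, and $\sum_iF_i=E^{-\frac12}\big(\sum_iE_i\big)E^{-\frac12}=I_{q+1}$. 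Using $(|\mathbf{1}\rangle\langle\mathbf{1}|)^2=(q+1)\,|\mathbf{1}\rangle\langle\mathbf{1}|$ one solves $EE^{-1}=I_{q+1}$ to obtain the closed form $E^{-1}=a\,I_{q+1}+b\,|\mathbf{1}\rangle\langle\mathbf{1}|$ with $a=\frac{(q+1)^2}{q^2+2q+2}$ and $b=\frac{(q+1)^2}{(q^2+2q+2)(q^2+q+1)}$.

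Next I would expand $\langle u_i|E^{-1}|u_j\rangle=a\,\langle u_i|u_j\rangle+b\,\langle u_i|\mathbf{1}\rangle\langle\mathbf{1}|u_j\rangle$ and treat the three cases. In Case~1 ($|u_\tau\rangle$ against $|u_\psi\rangle$, $\tau\neq\psi$), each of $\langle u_\tau|u_\psi\rangle=\frac{1}{q+1}\sum_{s\in S}(\tau^{-1}\psi)(s)$, $\langle u_\tau|\mathbf{1}\rangle$ and $\langle\mathbf{1}|u_\psi\rangle$ is, up to normalization, a sum of a \emph{nontrivial} character of $N$ over $S$, so the Li bound (Lemma~\ref{lemma3.4}) bounds each such sum by $\sqrt q$; combining these via the triangle inequality with $a,b>0$ yields the stated bound. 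Case~2 ($|u_\tau\rangle$ against $|e_j\rangle$) is analogous, using $|\langle u_\tau|e_j\rangle|=\frac{1}{\sqrt{q+1}}$ and $|\langle\mathbf{1}|e_j\rangle|=1$ together with the Li bound for $\langle u_\tau|\mathbf{1}\rangle$. Case~3 ($|e_i\rangle$ against $|e_j\rangle$, $i\neq j$) needs no character-sum estimate: $\langle e_i|e_j\rangle=0$ while $\langle e_i|\mathbf{1}\rangle\langle\mathbf{1}|e_j\rangle=1$, so $(q+1)^2\mathrm{Tr}(F_iF_j)=b^2$ exactly, matching the equality in part~3). In each case one reads off $\varepsilon_{q+1}$ and the corresponding $O(\cdot)$ estimate by a routine asymptotic expansion. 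For informational completeness I would set $\sum_{\psi\neq1}k_\psi E_\psi+\sum_{i=1}^{q+1}m_iU_i=0$, with $E_\psi=\frac{1}{q+1}|u_\psi\rangle\langle u_\psi|$ and $U_i=\frac{1}{q+1}|e_i\rangle\langle e_i|$, and compare off-diagonal $(k,l)$-entries; since the $U_i$ are diagonal, this gives $\sum_{\psi\neq1}k_\psi\psi(d_kd_l^{-1})=0$ for all $k\neq l$, so Lemma~\ref{lemma3.3} forces $k_\psi\equiv0$, and the diagonal entries then force all $m_i=0$.

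The main obstacle is not any single hard estimate but making sure the Li bound is genuinely applicable in Cases~1 and~2: this requires recognizing that $\tau^{-1}\psi$ is a nontrivial character whenever $\tau\neq\psi$, that $\langle u_\tau|\mathbf{1}\rangle$ equals $\frac{1}{\sqrt{q+1}}$ times the conjugate of $\sum_{s\in S}\tau(s)$ with $\tau$ nontrivial, and that $S=\{d_1,\dots,d_{q+1}\}$ is exactly the index set of the coordinates of the vectors. Once these identifications are secured, the remaining work---inverting $E$, combining the three triangle-inequality bounds, and unwinding the asymptotics---is the same determinant-free linear algebra and bookkeeping already carried out in Theorem~\ref{theorem2.10}.
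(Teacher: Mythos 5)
Your proposal is correct and follows essentially the same route as the paper: the same computation of $E$ as a rank-one perturbation of the identity, the same closed-form inverse $E^{-1}=a\,I_{q+1}+b\,V_{q+1}$, the same three-case application of the Li bound via the triangle inequality, and the same use of Lemma \ref{lemma3.3} on off-diagonal entries for informational completeness. The only (cosmetic) difference is that you establish positive definiteness of $E$ from the eigenvalues $\frac{q^{2}+q+1}{(q+1)^{2}}$ and $\frac{q^{2}+2q+2}{(q+1)^{2}}$ of the rank-one perturbation, whereas the paper checks the leading principal minors; both are equally valid.
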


\begin{proof}
First, let us show that $\mathcal{F}$ satisfies the completeness/POVM condition of an $\varepsilon_{q+1}$-ASIC-POVM.
We have
\begin{equation*}
E=\sum_{i=1}^{(q+1)^{2}}E_{i}=\frac{1}{q+1}\sum_{\psi\in\widehat{N}\backslash \{1\}}|u_{\psi}\rangle\langle u_{\psi}|+\frac{1}{q+1}I_{q+1}.
\end{equation*}

For any $1\leq i,j\leq q+1$, the $(i,j)$-th entry of $E$ is
\begin{equation*}
E(i,j)=\frac{1}{(q+1)^{2}}\sum_{\psi\in\widehat{N}\backslash \{1\}}\psi(d_{i}d_{j}^{-1})+\frac{1}{q+1}\delta_{i,j}.
\end{equation*}
Then, we see that

(a) For $i=1,2,\ldots,q+1$, we have $E(i,i)=\frac{1}{(q+1)^{2}}\cdot (q^{2}+q)+\frac{1}{q+1}=1$.

(b) For $1\leq i\neq j\leq q+1$, we have $E(i,j)=-\frac{1}{(q+1)^{2}}$.

Hence, we obtain that
\begin{equation*}
E=\frac{(q+1)^{2}+1}{(q+1)^{2}}I_{q+1}-\frac{1}{(q+1)^{2}}V_{q+1},
\end{equation*}
where $V_{q+1}$ is an $(q+1)\times (q+1)$ matrix whose entries are all $1$. Then, we calculate that
\begin{equation*}
E^{-1}=\frac{(q+1)^{2}}{q^{2}+2q+2}I_{q+1}+\frac{(q+1)^{2}}{(q^{2}+2q+2)(q^{2}+q+1)}V_{q+1}.
\end{equation*}

For each $u=1,2,\ldots,q+1$, we compute that the $u$-th leading principal minor of $E$ is
\begin{equation*}
\Big(1-\frac{u}{(q+1)^{2}+1}\Big)\Big(1+\frac{1}{(q+1)^{2}}\Big)^{u}>0,
\end{equation*}
which illustrates that $E$ is positive definite.
Hence, $E^{-1}$ exists and it is positive definite as well, which implies that we can obtain the uniquely determined positive definite matrix $E^{-\frac{1}{2}}$.
Consequently,
\begin{equation*}
\sum_{i=1}^{(q+1)^{2}}F_{i}=E^{-\frac{1}{2}}\Bigg(\sum_{i=1}^{(q+1)^{2}}E_{i}\Bigg)E^{-\frac{1}{2}}=I_{q+1}.
\end{equation*}
Therefore, $\mathcal{F}$ satisfies the completeness/POVM condition of an $\varepsilon_{q+1}$-ASIC-POVM.

Next, let us prove that $\mathcal{F}$ satisfies the approximate symmetry condition of an $\varepsilon_{q+1}$-ASIC-POVM.

{\bfseries Case 1:} For $F_{i}=\frac{1}{q+1}E^{-\frac{1}{2}}|u_{\tau}\rangle\langle u_{\tau}|E^{-\frac{1}{2}}$ and
$F_{j}=\frac{1}{q+1}E^{-\frac{1}{2}}|u_{\psi}\rangle\langle u_{\psi}|E^{-\frac{1}{2}}$,
where $\tau\neq\psi\in\widehat{N}\backslash \{1\}$, we obtain
\begin{align*}
(q+1)^{2}\mathrm{Tr}(F_{i}F_{j})
&=|\langle u_{\tau}|E^{-1}|u_{\psi}\rangle|^{2}\\
&=\Big|\frac{(q+1)^{2}}{q^{2}+2q+2}\langle u_{\tau}|u_{\psi}\rangle+\frac{(q+1)^{2}}{(q^{2}+2q+2)(q^{2}+q+1)}\langle u_{\tau}|V_{q+1}|u_{\psi}\rangle\Big|^{2}.
\end{align*}

It follows from Lemma \ref{lemma3.4} that
\begin{equation*}
\big|\langle u_{\tau}|u_{\psi}\rangle\big|=\frac{1}{q+1}\Big|\sum_{i=1}^{q+1}(\tau^{-1}\psi)(d_{i})\Big|\leq\frac{\sqrt{q}}{q+1},
\end{equation*}
\begin{equation*}
\big|\langle u_{\tau}|V_{q+1}|u_{\psi}\rangle\big|=\frac{1}{q+1}\Big|\sum_{i=1}^{q+1}\tau^{-1}(d_{i})\Big|\cdot\Big|\sum_{i=1}^{q+1}\psi(d_{i})\Big|         \leq\frac{q}{q+1}.
\end{equation*}
Hence, we have that
\begin{align*}
(q+1)^{2}\mathrm{Tr}(F_{i}F_{j})
&\leq \bigg(\frac{(q+1)^{2}}{q^{2}+2q+2}\cdot \frac{\sqrt{q}}{q+1}+\frac{(q+1)^{2}}{(q^{2}+2q+2)(q^{2}+q+1)}\cdot \frac{q}{q+1}\bigg)^{2}\\
&=\left(\frac{\sqrt{q}(q+1)(q^{2}+q+\sqrt{q}+1)}{(q^{2}+2q+2)(q^{2}+q+1)}\right)^{2}\\
&=\left(\frac{\frac{\sqrt{q}(q+1)^{\frac{3}{2}}}{q^{2}+2q+2}+\frac{q(q+1)^{\frac{3}{2}}}{(q^{2}+2q+2)(q^{2}+q+1)}}{\sqrt{q+1}}\right)^{2}
=\left(\frac{1+\tilde{\varepsilon}_{q+1}}{\sqrt{q+1}}\right)^{2}=\frac{1+\varepsilon_{q+1}}{q+1},
\end{align*}
where the two infinitesimals $\tilde{\varepsilon}_{q+1}$ and $\varepsilon_{q+1}$ are given by
\begin{equation*}
\tilde{\varepsilon}_{q+1}=\frac{\sqrt{q}(q+1)^{\frac{3}{2}}(q^{2}+q+\sqrt{q}+1)-(q^{2}+2q+2)(q^{2}+q+1)}{(q^{2}+2q+2)(q^{2}+q+1)},
\end{equation*}
\begin{equation*}
\varepsilon_{q+1}=\tilde{\varepsilon}_{q+1}^{2}+2\tilde{\varepsilon}_{q+1}
=\frac{q(q+1)^{3}(q^{2}+q+\sqrt{q}+1)^{2}-(q^{2}+2q+2)^{2}(q^{2}+q+1)^{2}}{(q^{2}+2q+2)^{2}(q^{2}+q+1)^{2}}.
\end{equation*}
Besides, we have
\begin{equation*}
\left|\frac{1}{q+2}-\left(\frac{\sqrt{q}(q+1)(q^{2}+q+\sqrt{q}+1)}{(q^{2}+2q+2)(q^{2}+q+1)}\right)^{2}\right|=O(q^{-\frac{5}{2}}).
\end{equation*}

{\bfseries Case 2:} For $F_{i}=\frac{1}{q+1}E^{-\frac{1}{2}}|u_{\tau}\rangle\langle u_{\tau}|E^{-\frac{1}{2}}$ and
$F_{j}=\frac{1}{q+1}E^{-\frac{1}{2}}|e_{j}\rangle\langle e_{j}|E^{-\frac{1}{2}}$, where $\tau\in\widehat{N}\backslash \{1\}$ and $j=1,2,\ldots,q+1$,
we obtain
\begin{align*}
(q+1)^{2}\mathrm{Tr}(F_{i}F_{j})
&=|\langle u_{\tau}|E^{-1}|e_{j}\rangle|^{2}\\
&=\Bigg|\frac{(q+1)^{2}}{q^{2}+2q+2}\langle u_{\tau}|e_{j}\rangle+\frac{(q+1)^{2}}{(q^{2}+2q+2)(q^{2}+q+1)}\langle u_{\tau}|V_{q+1}|e_{j}\rangle\Bigg|^{2}\\
&=\Bigg|\frac{(q+1)^{2}}{q^{2}+2q+2}\cdot\frac{1}{\sqrt{q+1}}\tau^{-1}(d_{j})
+\frac{(q+1)^{2}}{(q^{2}+2q+2)(q^{2}+q+1)}\cdot\frac{1}{\sqrt{q+1}}\sum_{i=1}^{q+1}\tau^{-1}(d_{i})\Bigg|^{2}\\
&\leq\Bigg(\frac{(q+1)^{2}}{q^{2}+2q+2}\cdot\frac{1}{\sqrt{q+1}}
+\frac{(q+1)^{2}}{(q^{2}+2q+2)(q^{2}+q+1)}\cdot\frac{1}{\sqrt{q+1}}\cdot\sqrt{q}\Bigg)^{2}\\
&=\left(\frac{(q+1)^{\frac{3}{2}}(q^{2}+q+\sqrt{q}+1)}{(q^{2}+2q+2)(q^{2}+q+1)}\right)^{2}\\
&=\Bigg(\frac{\frac{(q+1)^{2}}{q^{2}+2q+2}+\frac{\sqrt{q}(q+1)^{2}}{(q^{2}+2q+2)(q^{2}+q+1)}}{\sqrt{q+1}}\Bigg)^{2}
=\Bigg(\frac{1+\tilde{\varepsilon}_{q+1}}{\sqrt{q+1}}\Bigg)^{2}=\frac{1+\varepsilon_{q+1}}{q+1},
\end{align*}
where the two infinitesimals $\tilde{\varepsilon}_{q+1}$ and $\varepsilon_{q+1}$ are given by
\begin{equation*}
\tilde{\varepsilon}_{q+1}=\frac{(q+1)^{2}(q^{2}+q+\sqrt{q}+1)-(q^{2}+2q+2)(q^{2}+q+1)}{(q^{2}+2q+2)(q^{2}+q+1)},
\end{equation*}
\begin{equation*}
\varepsilon_{q+1}=\tilde{\varepsilon}_{q+1}^{2}+2\tilde{\varepsilon}_{q+1}
=\frac{(q+1)^{4}(q^{2}+q+\sqrt{q}+1)^{2}-(q^{2}+2q+2)^{2}(q^{2}+q+1)^{2}}{(q^{2}+2q+2)^{2}(q^{2}+q+1)^{2}}.
\end{equation*}
Besides, we have
\begin{equation*}
\left|\frac{1}{q+2}-\left(\frac{(q+1)^{\frac{3}{2}}(q^{2}+q+\sqrt{q}+1)}{(q^{2}+2q+2)(q^{2}+q+1)}\right)^{2}\right|=O(q^{-2}).
\end{equation*}

{\bfseries Case 3:} For $F_{i}=\frac{1}{q+1}E^{-\frac{1}{2}}|e_{i}\rangle\langle e_{i}|E^{-\frac{1}{2}}$ and
$F_{j}=\frac{1}{q+1}E^{-\frac{1}{2}}|e_{j}\rangle\langle e_{j}|E^{-\frac{1}{2}}$, where $1\leq i\neq j\leq q+1$, we have
\begin{align*}
(q+1)^{2}\mathrm{Tr}(F_{i}F_{j})
&=|\langle e_{i}|E^{-1}|e_{j}\rangle|^{2}\\
&=\Bigg|\frac{(q+1)^{2}}{q^{2}+2q+2}\langle e_{i}|e_{j}\rangle+\frac{(q+1)^{2}}{(q^{2}+2q+2)(q^{2}+q+1)}\langle e_{i}|V_{q+1}|e_{j}\rangle\Bigg|^{2}\\
&=\Bigg|\frac{(q+1)^{2}}{q^{2}+2q+2}\cdot 0+\frac{(q+1)^{2}}{(q^{2}+2q+2)(q^{2}+q+1)}\cdot 1\Bigg|^{2}\\
&=\frac{(q+1)^{4}}{(q^{2}+2q+2)^{2}(q^{2}+q+1)^{2}}\\
&=\Bigg(\frac{\frac{(q+1)^{\frac{5}{2}}}{(q^{2}+2q+2)(q^{2}+q+1)}}{\sqrt{q+1}}\Bigg)^{2}\\
&\leq\Bigg(\frac{1+\frac{(q+1)^{\frac{5}{2}}}{(q^{2}+2q+2)(q^{2}+q+1)}}{\sqrt{q+1}}\Bigg)^{2}
=\Bigg(\frac{1+\tilde{\varepsilon}_{q+1}}{\sqrt{q+1}}\Bigg)^{2}=\frac{1+\varepsilon_{q+1}}{q+1},
\end{align*}
where the two infinitesimals $\tilde{\varepsilon}_{q+1}$ and $\varepsilon_{q+1}$ are given by
\begin{equation*}
\tilde{\varepsilon}_{q+1}=\frac{(q+1)^{\frac{5}{2}}}{(q^{2}+2q+2)(q^{2}+q+1)},
\end{equation*}
\begin{equation*}
\varepsilon_{q+1}=\tilde{\varepsilon}_{q+1}^{2}+2\tilde{\varepsilon}_{q+1}
=\frac{(q+1)^{5}+2(q+1)^{\frac{5}{2}}(q^{2}+2q+2)(q^{2}+q+1)}{(q^{2}+2q+2)^{2}(q^{2}+q+1)^{2}}.
\end{equation*}
Besides, we have
\begin{equation*}
\left|\frac{1}{q+2}-\frac{(q+1)^{4}}{(q^{2}+2q+2)^{2}(q^{2}+q+1)^{2}}\right|=O(q^{-1}).
\end{equation*}

By {\bfseries Cases 1, 2} and {\bfseries 3}, we deduce that $\mathcal{F}$ satisfies the approximate symmetry condition of an $\varepsilon_{q+1}$-ASIC-POVM.

Finally, let us prove that $\mathcal{F}$ satisfies the informational completeness condition of an $\varepsilon_{q+1}$-ASIC-POVM.
It suffices to prove that $E_{1},E_{2},\ldots,E_{(q+1)^{2}}$ are linearly independent as elements of $M_{q+1}(\mathbb{C})$.

Suppose there exist $k_{\psi}\in\mathbb{C}$ and $\ell_{i}\in\mathbb{C}$ such that
\begin{equation}\label{eq3.2}
\sum_{\psi\in\widehat{N}\backslash \{1\}}k_{\psi}E_{\psi}+\sum_{i=1}^{q+1}\ell_{i}U_{i}=0,
\end{equation}
where $E_{\psi}=\frac{1}{q+1}|u_{\psi}\rangle\langle u_{\psi}|$ for each $\psi\in\widehat{N}\backslash \{1\}$,
and $U_{i}=\frac{1}{q+1}|e_{i}\rangle\langle e_{i}|$ for $i=1,2,\ldots,q+1$.

Comparing the $(i,i)$-th entry of Eq. (\ref{eq3.2}), where $i=1,2,\ldots,q+1$, we get that
\begin{equation}\label{eq3.3}
\frac{1}{(q+1)^{2}}\sum_{\psi\in\widehat{N}\backslash \{1\}}k_{\psi}+\frac{\ell_{i}}{q+1}=0.
\end{equation}

Comparing the $(i,j)$-th entry of Eq. (\ref{eq3.2}), where $1\leq i\neq j\leq q+1$, we have that
\begin{equation}\label{eq3.4}
\sum_{\psi\in\widehat{N}\backslash \{1\}}k_{\psi}\psi(d_{i}d_{j}^{-1})=0.
\end{equation}

Then, it follows from Lemma \ref{lemma3.3} that Eq. (\ref{eq3.4}) implies $k_{\psi}\equiv0$ for all $\psi\in\widehat{N}\backslash \{1\}$.
Inserting this into Eq. (\ref{eq3.3}), we have $\ell_{i}=0$ for $i=1,2,\ldots,q+1$.
Hence, $E_{1},E_{2},\ldots,E_{(q+1)^{2}}$ are linearly independent as elements of $M_{q+1}(\mathbb{C})$,
and so are $F_{1},F_{2},\ldots,F_{(q+1)^{2}}$. Consequently, $\mathcal{F}$ satisfies the informational completeness condition of
an $\varepsilon_{q+1}$-ASIC-POVM.

Therefore, we complete the proof. $\hfill\square$
\end{proof}

\subsection{Asymptotically optimal codebooks from Theorem \ref{theorem3.5}}\label{subsection 3.3}

An $(N,K)$ codebook $\mathcal{C}$ is a set $\{\mathbf{c}_{0},\mathbf{c}_{1},\ldots,\mathbf{c}_{N-1}\}$,
where $\mathbf{c}_{i}$ is a unit-norm complex vector in $\mathbb{C}^{K}$ for each $i=0,1,\ldots,N-1$.
The maximum cross-correlation amplitude, which is a performance measure of a codebook in practical applications, of the $(N,K)$ codebook $\mathcal{C}$ is defined by
\begin{equation*}
I_{\mathrm{max}}(\mathcal{C})=\max \limits_{0\leq j\neq k\leq N-1}|\mathbf{c}_{j}\mathbf{c}_{k}^{H}|,
\end{equation*}
where $\mathbf{c}_{k}^{H}$ denotes the conjugate transpose of the complex vector $\mathbf{c}_{k}$.

To evaluate an $(N,K)$ codebook $\mathcal{C}$, it is important to find the minimum achievable $I_{\mathrm{max}}(\mathcal{C})$.
Welch \cite{Welch1974Lower} showed that for any $(N,K)$ codebook $\mathcal{C}$ with $N\geq K$,
\begin{equation*}
I_{\mathrm{max}}(\mathcal{C})\geq I_{W}=\sqrt{\frac{N-K}{(N-1)K}},
\end{equation*}
and the equality holds if and only if $|\mathbf{c}_{j}\mathbf{c}_{k}^{H}|=\sqrt{\frac{N-K}{(N-1)K}}$ for all $0\leq j\neq k\leq N-1$.

If $I_{\mathrm{max}}(\mathcal{C})$ asymptotically meet the Welch bound $I_{W}$ for sufficiently large $K$,
i.e., $\lim \limits_{K\rightarrow\infty}\frac{I_{\mathrm{max}}(\mathcal{C})}{I_{W}}=1$, we call $\mathcal{C}$ an asymptotically optimal codebook.
Currently, constructing asymptotically optimal codebooks is a hot topic in coding theory (see, e.g., \cite{Luo2018Two,Heng2017New,Qian2023Gaussian}).

One can see that the vectors set $\mathcal{C}$ defined in Theorem \ref{theorem3.5} forms a $\left((q+1)^{2},q+1\right)$ codebook
with maximum cross-correlation amplitude $I_{\mathrm{max}}(\mathcal{C})=\frac{1}{\sqrt{q+1}}$. Further, we obtain that
\begin{equation*}
\lim \limits_{q\rightarrow\infty}\frac{I_{\mathrm{max}}(\mathcal{C})}{I_{W}}
=\lim \limits_{q\rightarrow\infty}\Big(\frac{1}{\sqrt{q+1}}\Big/\frac{1}{\sqrt{q+2}}\Big)=1,
\end{equation*}
which reveals that $\mathcal{C}$ is an asymptotically optimal codebook.

\subsection{Comparison of our results with known ones}

Theorem \ref{theorem3.5} and \cite[Theorem 3.2]{Luo2018New} construct $\varepsilon_{q+1}$-ASIC-POVMs.
Although their parameters are the same, we summarize other differences of these results in the following.
\vspace{-3pt}
\begin{itemize}
\item [(1)] To guarantee that the construction satisfies the approximate symmetry condition,
\cite[Theorem 3.2]{Luo2018New} utilizes the character sum $|\sum_{d\in D}\chi(d)|=\sqrt{q}$,
while Theorem \ref{theorem3.5} utilizes the Li bound $\big|\sum_{s\in S}\psi(s)\big|\leq\sqrt{q}$.
To guarantee that the construction satisfies the informational completeness condition,
\cite[Theorem 3.2]{Luo2018New} utilizes the Fourier transformation,
while Theorem \ref{theorem3.5} utilizes Lemma \ref{lemma3.3}.

\vspace{-3pt}

\item [(2)] \underline{Potential application}. For other possible applications of $\varepsilon_{q+1}$-ASIC-POVMs in the future (e.g., they might be useful in quantum cryptography \cite{Boileau2005Unconditional}, dynamic quantum tomography \cite{Czerwinski2021Quantum}, etc), we characterize ``how close'' the $\varepsilon_{q+1}$-ASIC-POVMs
are from being SIC-POVMs of dimension $q+1$. Specifically,
i) we give the explicit expressions about all $\varepsilon_{q+1}$ in different kinds of upper bounds of $(q+1)^{2}\mathrm{Tr}(F_{i}F_{j})$ in Theorem \ref{theorem3.5};
ii) we calculate the absolute value of the difference $|\frac{1}{q+2}-A|$ in Theorem \ref{theorem3.5}, where
$\frac{1}{q+2}$ is the value of $d^{2}\mathrm{Tr}(E_{i}E_{j})$ in Definition \ref{definition1.3} (ii) for $d=q+1$, and
$A$ is the explicit upper bound of $d^{2}\mathrm{Tr}(A_{i}A_{j})$ in Definition \ref{definition1.4} (ii) for $d=q+1$.

\vspace{-3pt}

\item [(3)] Two difference sets $D_{1}$ in group $G_{1}$ and $D_{2}$ in group $G_{2}$ are called equivalent if there exists a group isomorphism $\varrho$
between $G_{1}$ and $G_{2}$ such that $\varrho(D_{1})=\{\varrho(d):d\in D_{1}\}=gD_{2}$ for some $g\in G_{2}$.
As far as we know, although the set $S$ defined in Theorem \ref{theorem3.5} is a difference set with the same parameters as the set $D$ in
\cite[Theorem 3.2]{Luo2018New}, there is no evidence showing that these two difference sets are equivalent.

\vspace{-3pt}

\item [(4)] \underline{Practical application}. As shown in subsection \ref{subsection 3.3}, the set $\mathcal{C}$ corresponding to the $\varepsilon_{q+1}$-ASIC-POVM in
Theorem \ref{theorem3.5} forms a $\left((q+1)^{2},q+1\right)$ asymptotically optimal codebook.
The construction of asymptotically optimal codebooks is currently a hot topic in coding theory (see, e.g., \cite{Luo2018Two,Heng2017New,Qian2023Gaussian}).
\end{itemize}

\section{Concluding remarks}\label{section4}

In this paper, we constructed a class of $\varepsilon_{d}$-ASIC-POVMs in dimension $d=q$ and a class of $\varepsilon_{d}$-ASIC-POVMs in dimension $d=q+1$, respectively, where $q$ is a prime power.
In the first construction, we proved that all $2$-to-$1$ PN functions can be used for constructing $\varepsilon_{q}$-ASIC-POVMs.
We showed that there exist $2$-to-$1$ PN functions that do not satisfy the condition in \cite[Theorem III.3]{Cao2017Two}.
Therefore, the class of $\varepsilon_{q}$-ASIC-POVMs constructed in this paper is more general than the class constructed in \cite{Cao2017Two}.
We also showed that the set of vectors corresponding to the $\varepsilon_{q}$-ASIC-POVM forms a biangular frame.
The construction of $\varepsilon_{q+1}$-ASIC-POVMs is based on a multiplicative character sum estimate called the Li bound.
We showed that the set of vectors corresponding to the $\varepsilon_{q+1}$-ASIC-POVM forms a $\left((q+1)^{2},q+1\right)$ asymptotically optimal codebook.
We also characterized ``how close'' the $\varepsilon_{q}$-ASIC-POVMs (resp. $\varepsilon_{q+1}$-ASIC-POVMs) are from being SIC-POVMs
of dimension $q$ (resp. dimension $q+1$).

An interesting problem for the future is to construct $\varepsilon_{d}$-ASIC-POVMs via other character sum estimates related to some special kinds of functions
such as $m$-to-$1$ ($m\geq3$) mappings and rational functions over finite fields.
Besides, it would be interesting to construct IC-POVMs $\{\frac{1}{d}|u_{i}\rangle\langle u_{i}|:i=1,2,\ldots,d^{2}\}$
with the approximate symmetry condition $|\langle u_{i}|u_{j}\rangle|^{2}\leq \frac{c+\varepsilon_{d}}{d}$ for certain constant $c$ and all $1\leq i\neq j\leq d^{2}$, which are not restricted to the approximate symmetry condition $|\langle u_{i}|u_{j}\rangle|^{2}\leq \frac{1+\varepsilon_{d}}{d}$ of $\varepsilon_{d}$-ASIC-POVMs.

Finally, let us explain the significance of constructing $\varepsilon_{d}$-ASIC-POVMs from the following two points.
\vspace{-3pt}
\begin{itemize}
\item [(1)] As we know, SIC-POVMs are a special type of IC-POVMs, which is important for quantum state tomography (see, e.g., \cite{Caves2002Unknown,Scott2006Tight}).
However, the theoretical construction of SIC-POVMs poses challenges.
Affected by noise, it is also difficult to accurately prepare SIC-POVMs in experiments.
Some SIC-POVMs with small dimensions, such as dimensions 2 and 3, have been prepared in experiments, see \cite{Bian2015Realization,Wang2023Generalized}.
In general, noise is difficult to avoid, and even when the noise is relatively small, the SIC-POVM prepared in the experiment is often an approximate form,
which can be regarded as an $\varepsilon_{d}$-ASIC-POVM.
Therefore, in terms of existence, currently the $\varepsilon_{d}$-ASIC-POVMs are more extensive than SIC-POVMs both in theoretical construction and experimental preparation.

\vspace{-3pt}

\item [(2)] Since $\varepsilon_{d}$-ASIC-POVMs are IC-POVMs having the property of informational completeness, they can be used in classical shadow tomography.
To be specific, instead of obtaining the full information of the unknown state $\rho$, people aim to predict its properties based on the values of $\{\mbox{tr}(\rho M_k)\}$ (see, e.g., \cite{Huang2020Predicting}), where $\{M_k\}$ are the observables related to the properties.
The idea promotes quantum advantage in learning from experiments \cite{Huang2022Quantum}. It is proved that the sample complexity could decrease exponentially if we randomly choose the IC measurements and use the classical shadow tomography method. For different properties, different IC measurements could produce different predicting accuracy. For example, the Pauli measurements are not good at predicting the entanglement property.
In \cite{Nguyen2022Optimizing}, the authors formulated classical shadow tomography by changing projective measurements to generalized POVMs.
Based on the above literature, we believe that $\varepsilon_{d}$-ASIC-POVMs can be used in classical shadow tomography.
Moreover, when the approximately symmetric property is taken into consideration, namely, the usage of $\varepsilon_{d}$-ASIC-POVMs might bring some advantage in predicting some properties,
e.g., reducing sampling complexity, expanding the types of observables that can be used for prediction, etc.
\end{itemize}

\section*{Acknowledgment}

The authors would like to sincerely thank the two anonymous referees for their detailed and constructive comments and suggestions that greatly improved the quality of this article. Meng Cao is grateful to Yu Wang for useful discussions on classical shadow tomography.
The research of Xiantao Deng was supported in part by National Natural Science Foundation of China (Grant No. 12171331).

\section*{Data availability statement}

Data sharing is not applicable to this paper as no datasets were generated or analyzed during the current study.

\section*{Conflict of interest statement}

The authors declare that they have no known competing financial interests or personal relationships that could have appeared to
influence the work reported in this paper.

\section*{Appendix: Cardinality of all $2$-to-$1$ mappings over $\mathbb{F}_{q}$}
\renewcommand{\thetheorem}{\Alph{theorem}}

The following proposition gives the cardinality of all $2$-to-$1$ mappings over $\mathbb{F}_{q}$.

\begin{proposition}\label{propositionE}
Let $q=p^{k}$ be a prime power. Let $L$ denote the set consisting of all $2$-to-$1$ mappings over $\mathbb{F}_{q}$.
Then,

\vspace{4pt}

(1) if $p=2$, we have $\sharp L=\displaystyle\frac{(2^{k}!)^{2}}{2^{2^{k-1}}(2^{k-1}!)^{2}}$;

\vspace{4pt}

(2) if $p\neq2$, we have $\sharp L=\displaystyle\frac{p^{2k}\big((p^{k}-1)!\big)^{2}}{2^{\frac{p^{k}-1}{2}}\big(\frac{p^{k}-1}{2}!\big)^{2}}$.
\end{proposition}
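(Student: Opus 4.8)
The plan is to count the cardinality $\sharp L$ by setting up a bijection between $2$-to-$1$ mappings $f\colon\mathbb{F}_q\to\mathbb{F}_q$ and pairs of combinatorial data: an unordered partition of the domain into fibers of the sizes prescribed by Definition \ref{definition2.3}, together with an injective assignment of image values to those fibers. I would then evaluate the two factors separately, splitting into the even and odd cases dictated by the parity of $q=p^{k}$.

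First I would observe that any mapping $f$ is completely determined by its family of nonempty fibers $f^{-1}(b)$ (for $b\in\mathrm{Im}(f)$) together with the value to which each fiber is sent. By Definition \ref{definition2.3}, for a $2$-to-$1$ mapping every fiber has size $1$ or $2$. Distinct fibers are automatically sent to distinct values, and a size-$1$ fiber and a size-$2$ fiber can never carry the same label, since their preimage counts ($1$ versus $2$) would otherwise contradict the defining property. Hence prescribing $f$ is equivalent to choosing (a) an unordered partition of $\mathbb{F}_q$ into blocks of the allowed sizes, and (b) an injection from the set of blocks into $\mathbb{F}_q$ recording each block's image; this correspondence is the object to be exploited.

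For $p=2$, the set $\mathbb{F}_q$ has even cardinality $2^{k}$, so part 1) of Definition \ref{definition2.3} forces all $2^{k-1}$ fibers to have size $2$. The number of partitions of a $2^{k}$-set into $2^{k-1}$ unordered pairs is $\frac{(2^{k})!}{2^{2^{k-1}}(2^{k-1})!}$, and the number of injections of these $2^{k-1}$ blocks into $\mathbb{F}_q$ is $\frac{(2^{k})!}{(2^{k-1})!}$; their product is exactly the formula in (1). For $p$ odd, $q$ is odd, so part 2) gives one size-$1$ fiber and $\frac{q-1}{2}$ size-$2$ fibers, i.e. $\frac{q+1}{2}$ blocks in all. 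Selecting the singleton ($q$ choices) and then pairing the remaining $q-1$ elements yields $\frac{q!}{2^{(q-1)/2}\big((q-1)/2\big)!}$ partitions, while labeling the $\frac{q+1}{2}$ blocks injectively into $\mathbb{F}_q$ contributes $\frac{q!}{\big((q-1)/2\big)!}$. Multiplying and rewriting $(q!)^{2}=p^{2k}\big((p^{k}-1)!\big)^{2}$ gives (2).

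The main obstacle — really the only subtle point — is to justify that this correspondence is a genuine bijection with no overcounting. I would take care to argue that the injectivity of the labeling in (b) is \emph{forced} by the $2$-to-$1$ condition rather than imposed as an extra hypothesis, and that it is unordered (not ordered) pairs that index the fibers; the latter is precisely what produces the factor $2^{2^{k-1}}$ (resp. $2^{(q-1)/2}$) in the denominator, so a careless treatment there would alter the final count.
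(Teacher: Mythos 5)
Your proof is correct, and it is organized differently from the paper's. The paper does not prove case (1) at all: for $p=2$ it simply cites \cite[Proposition 3]{Mesnager2019On}, and only the odd case is argued, by a sequential count that interleaves codomain and domain choices --- first pick the exceptional value $c_{1}$ ($p^{k}$ choices) and its single preimage ($p^{k}$ choices), then pick which $\frac{p^{k}-1}{2}$ values receive two preimages ($\binom{p^{k}-1}{(p^{k}-1)/2}$ choices), and finally assign disjoint pairs to them one at a time via the telescoping product $\binom{p^{k}-1}{2}\binom{p^{k}-3}{2}\cdots\binom{2}{2}$. You instead factor the count cleanly as (number of fiber partitions of the domain) $\times$ (number of injective labelings of the blocks by codomain values), which is a single bijective scheme that treats both parities uniformly; in the even case this also makes your argument self-contained where the paper defers to the literature. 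Both factorizations of course agree: your partition count $\frac{q!}{2^{(q-1)/2}((q-1)/2)!}$ times your injection count $\frac{q!}{((q-1)/2)!}$ equals the paper's product after the rewriting $(q!)^{2}=p^{2k}\bigl((p^{k}-1)!\bigr)^{2}$, and your handling of the two subtle points (injectivity of the labeling is automatic because distinct values have disjoint fibers, and the pairs must be unordered, producing the $2^{(q-1)/2}$, resp.\ $2^{2^{k-1}}$, in the denominator) is exactly what makes the bijection free of overcounting. The trade-off is that the paper's odd-case argument is marginally shorter once the even case is outsourced, while yours buys uniformity, a proof of (1), and a cleaner conceptual separation of domain data from codomain data.
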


\begin{proof}
(1): See \cite[Proposition 3]{Mesnager2019On}.

(2): For $p\neq2$, let $f\in L$ be a $2$-to-$1$ mapping over $\mathbb{F}_{p^{k}}$. Then,
\begin{itemize}
\item In $\mathbb{F}_{p^{k}}$, there is only one element, denoted by $c_{1}$, has exactly one preimage of $f$;

\item In $\mathbb{F}_{p^{k}}\backslash\{c_{1}\}$, there are $\frac{p^{k}-1}{2}$ elements, denoted by $c_{2},c_{3},\ldots,c_{\frac{p^{k}+1}{2}}$,
have two preimages of $f$, while the remaining $\frac{p^{k}-1}{2}$ elements have zero preimage of $f$.
\end{itemize}

For $c_{1}$, it has $p^{k}$ choices in $\mathbb{F}_{p^{k}}$, and its preimage $f^{-1}(c_{1})$ has also $p^{k}$ choices in $\mathbb{F}_{p^{k}}$.

For $c_{2},c_{3},\ldots,c_{\frac{p^{k}+1}{2}}$, they have $\dbinom{p^{k}-1}{\frac{p^{k}-1}{2}}$ choices in $\mathbb{F}_{p^{k}}\backslash\{c_{1}\}$.
Further, for $c_{2}$, its preimage $f^{-1}(c_{2})$ has $\dbinom{p^{k}-1}{2}$ choices;
for $c_{3}$, its preimage $f^{-1}(c_{3})$ has $\dbinom{p^{k}-3}{2}$ choices;
finally, for $c_{\frac{p^{k}+1}{2}}$, its preimage $f^{-1}\big(c_{\frac{p^{k}+1}{2}}\big)$ has $\dbinom{2}{2}$ choices.

\vspace{6pt}

Therefore, we calculate that
\begin{align*}
\sharp L&=p^{k}\cdot p^{k}\cdot\dbinom{p^{k}-1}{\frac{p^{k}-1}{2}}\cdot\dbinom{p^{k}-1}{2}\cdot\dbinom{p^{k}-3}{2}\cdots\dbinom{4}{2}\dbinom{2}{2}\nonumber\\
&=p^{2k}\cdot\frac{(p^{k}-1)!}{(\frac{p^{k}-1}{2}!)^{2}}\cdot\frac{(p^{k}-1)!}{(p^{k}-3)!\cdot2!}\cdot\frac{(p^{k}-3)!}{(p^{k}-5)!\cdot2!}
\cdots\frac{6!}{4!\cdot2!}\cdot\frac{4!}{2!\cdot2!}\nonumber\\
&=\frac{p^{2k}\big((p^{k}-1)!\big)^{2}}{2^{\frac{p^{k}-1}{2}}\big(\frac{p^{k}-1}{2}!\big)^{2}},
\end{align*}
which completes the proof. $\hfill\square$
\end{proof}

\footnotesize{
\bibliographystyle{plain}
\bibliography{revised}
}

\end{document}